\documentclass[
]{ceurart}

\usepackage{listings}
\usepackage{xcolor}
\usepackage{stmaryrd}
\usepackage{rotating}
\usepackage{tabularx}
\usepackage{booktabs}
\usepackage{amsthm}
\newtheorem{theorem}{Theorem}

\newtheorem{proposition}{Proposition}

\newtheorem{definition}{Definition}

\theoremstyle{definition}
\newtheorem{example}{Example}
\theoremstyle{remark}

\newtheorem*{note*}{Note}
\newtheorem{remark}{Remark}
\newtheorem*{remark*}{Remark}
\theoremstyle{claimstyle}

\newtheorem*{claim*}{Claim}

\newcommand{\shade}[1]{\colorbox{gray!25}{#1}}
\newcommand{\mshade}[1]{\colorbox{gray!25}{$#1$}}
\newcommand{\eval}[1]{$\llbracket$#1$\rrbracket$}
\newcommand{\meval}[1]{\llbracket#1\rrbracket}

\begin{document}

\copyrightyear{2022}
\copyrightclause{Copyright for this paper by its authors.
  Use permitted under Creative Commons License Attribution 4.0
  International (CC BY 4.0).}

\conference{17th Pragmatics of SAT international workshop,
  July 19, 2026, Lisbon, Portugal}

\title{Power Term Polynomial Algebra for Boolean Logic}


\author[1,2]{Emanuele Sansone}[%
orcid=0000-0002-6143-1619,
email=emanuele.sansone@kuleuven.be,
url=https://emsansone.github.io,
]
\cormark[1]
\address[1]{CSAIL, MIT, USA}
\address[2]{ESAT, KU Leuven, Belgium}

\author[1]{Armando Solar-Lezama}[%
orcid=0000-0001-7604-8252,
email=asolar@csail.mit.edu,
url=https://people.csail.mit.edu/asolar/,
]

\cortext[1]{Corresponding author.}

\begin{abstract}
  We introduce \emph{power term polynomial algebra}, a representation language for Boolean formulae designed to bridge conjunctive normal form (CNF) and algebraic normal form (ANF). The language is motivated by the \emph{tiling mismatch} between these representations: direct CNF$\leftrightarrow$ANF conversion may cause exponential blowup unless formulas are decomposed into smaller fragments using auxiliary variables and side constraints. In contrast, our framework addresses this mismatch within the representation itself, compactly encoding structured families of monomials while representing CNF clauses directly, thereby avoiding auxiliary variables and constraints at the abstraction level. We formalize the language through power terms and power term polynomials, define their semantics, and show that they admit algebraic operations corresponding to Boolean polynomial addition and multiplication. We prove several key properties of the language: disjunctive clauses admit compact canonical representations; power terms support local shortening and expansion rewrite rules; and products of atomic terms can be systematically rewritten within the language. Together, these results yield a symbolic calculus that enables direct manipulation of formulae without expanding them into ordinary ANF. The resulting framework provides a new intermediate representation and rewriting calculus that bridges clause-based and algebraic reasoning and suggests new directions for structure-aware CNF$\leftrightarrow$ANF conversion and hybrid reasoning methods.
\end{abstract}

\begin{keywords}
  Algebra \sep Boolean Logic \sep Power Term Polynomial Algebra \sep Power Term Polynomial Calculus \sep Polynomial Calculus
\end{keywords}

\maketitle

\section{Introduction}
\label{sec:intro}
Representations based on CNF and ANF lie at the core of modern approaches to reasoning about Boolean formulae. CNF representations underpin the success of modern SAT solvers and have been widely adopted for automated reasoning across a broad range of applications. In contrast, ANF representations play a central role in algebraic proof systems and have long been studied as a natural framework for analysing algebraic structure and complexity~\cite{buss2021proof,biere2021handbook}.

Although CNF and ANF describe the same underlying Boolean functions, they organize information in fundamentally different ways and therefore support different styles of reasoning. This difference becomes particularly relevant when converting between the two representations. While such conversions are always possible in principle, they can lead to significant growth in representation size because structure that is compact in one form may expand in the other~\cite{horacek2020conver}. As a result, existing approaches often require decomposing formulae into smaller fragments before translation. In this paper, we refer to this phenomenon as the \emph{tiling mismatch}: the structure of the original representation does not align with the granularity naturally supported by the target representation.

The goal of this work is to introduce a representation language that helps bridge this mismatch. We propose \emph{power term polynomial algebra}, an abstraction language that compactly represents structured families of monomials while still encoding CNF clauses directly. The key idea is to group related monomials into symbolic objects called \emph{power terms}, which can then be combined into \emph{power term polynomials}. This yields a representation that sits between clause-based and polynomial-based descriptions while supporting algebraic-style manipulation.

We first review the Boolean representations underlying SAT and the standard conversions between them, using this discussion to motivate the notion of tiling mismatch. We then introduce power terms and power term polynomials, define their semantics, and show that they admit natural composition rules based on addition and multiplication. Finally, we present rewriting principles that allow expressions to be manipulated directly in the abstraction language without expanding them into ordinary polynomial form.

\section{Preliminaries}
\label{sec:preliminaries}
We introduce the main representations underlying proof systems for the SAT problem, based on resolution and algebraic methods~\cite{buss2021proof}. We then present conversions between these representations and finally introduce the \textit{tiling mismatch} between representations.

In the following, we consider a finite set of variables $\mathcal{D}:=\{x_1,\ldots,x_n\}$, each defined over the set of Boolean values $\mathbb{B}:=\{0,1\}$, where $0$ and $1$ conventionally denote \texttt{false} and \texttt{true}, respectively. It is also useful to introduce the Boolean field $\mathbb{F}_2:=GF(2)$, also known as the Galois field with two elements, whose operations include addition, defined as integer addition modulo 2, and integer multiplication. Finally, we use the compact notation $[n]$ to denote the index set $\{1,\ldots,n\}$.

\noindent\textbf{Boolean functions}. The first representation is based on multivariate functions mapping $\mathbb{B}^n$ to $\mathbb{B}$. The behaviour of a Boolean function $f$ can be specified using a \textit{truth table}, which assigns a Boolean value to each possible combination of variable assignments. A truth table can be represented canonically by listing the assignments for which $f$ evaluates to $1$. Equivalently, this corresponds to defining the set of satisfying assignments $\mathcal{A}(f):=\{a\in\mathbb{B}^n \mid f(a)=1\}$.

\noindent\textbf{Propositional logic formulae}. The second representation introduces logical connectives, including negation $\neg$, conjunction $\land$, and disjunction $\lor$, to construct expressions composed of variables from the domain $\mathcal{D}$. In the context of SAT, formulae are typically expressed in CNF, i.e., as a conjunction of a finite set of clauses $\phi := \bigwedge_{i=1}^{m} C_i$. Each clause is a disjunction of literals $C_i := \bigvee_{j=1}^{n_i} L_{i,j}$, where literals are either variables or their negations (referred to as positive and negative literals, respectively).

Semantically, each logical connective is associated with a predefined truth table. The meaning of clauses and formulae can therefore be defined recursively from variable assignments and the corresponding truth tables of the connectives. Analogously to Boolean functions, the set of satisfying assignments of $\phi$ is the set of variable assignments for which the formula evaluates to \texttt{true}, i.e. $\mathcal{A}(\phi):=\{a\in\mathbb{B}^n \mid \phi(a)=1\}$. Every CNF formula $\phi$ corresponds to a unique Boolean function $f$, defined by the relation $\mathcal{A}(\phi) = \mathcal{A}(f)$. The mapping from CNF formulae to Boolean functions is therefore surjective, since the same Boolean function can be represented by many different CNF formulae.

\noindent\textbf{Boolean polynomials}. The third representation uses polynomials over the field $\mathbb{F}_2$, together with the idempotent relation $x_i^2 = x_i$ for every variable.\footnote{Mathematically written as $\mathbb{F}_2[x_1,\ldots,x_n]/\{x_i^2 = x_i \mid i \in [n]\}$.} This relation ensures that variables appearing in monomials are square-free.

More concretely, a polynomial is a linear combination of monomials
$p:=\sum_{S\subseteq[n]}\alpha_S m_S$, where each monomial is the product of a set of variables $m_S:=\prod_{i\in S}x_i$ (with the convention that $m_\emptyset:=1$) and $\alpha_S \in \mathbb{B}$ are coefficients associated with each monomial. Since every polynomial contains a unique set of monomials and monomials represent a basis for the polynomial space, the relation between the semantic behaviour of Boolean polynomials and Boolean functions is bijective. Consequently, Boolean polynomials provide a canonical representation of propositional logic formulae and are therefore often referred to as their ANF. Another subtle but important result is that the coefficients of the ANF representation are uniquely determined by the truth table of the corresponding Boolean function. In particular, the mapping between the truth table of a Boolean function and the coefficients of its ANF polynomial is given by the M\"obius transform~\cite{rota1964foundations,barbier2019mobius}.

\begin{example}\label{representations}
Let $\mathcal{D}=\{x_1,x_2\}$ and consider the Boolean function $f:\mathbb{B}^2\rightarrow\mathbb{B}$ with satisfying assignments $\mathcal{A}(f)=\{(1,0),(0,1)\}$. Equivalently, $f$ can be represented by the truth table
\[
\begin{array}{c c | c}
x_1 & x_2 & f(x_1,x_2) \\
\hline
0 & 0 & 0 \\
0 & 1 & 1 \\
1 & 0 & 1 \\
1 & 1 & 0
\end{array}
\]
An equivalent CNF formula is $\phi=(x_1 \lor x_2)\land(\neg x_1 \lor \neg x_2)$, for which $\mathcal{A}(\phi)=\mathcal{A}(f)$. The corresponding ANF is $p=x_1+x_2$,
with $\alpha_{\{1\}}=\alpha_{\{2\}}=1$ and $\alpha_{\emptyset}=\alpha_{\{1,2\}}=0$.
These coefficients can be obtained from the truth table values through the M\"obius transform, namely
\[
\begin{pmatrix}
\alpha_{\emptyset}\\
\alpha_{\{1\}}\\
\alpha_{\{2\}}\\
\alpha_{\{1,2\}}
\end{pmatrix}
=
\begin{pmatrix}
M_1 & 0\\
M_1 & M_1
\end{pmatrix}
\begin{pmatrix}
f(0,0)\\
f(1,0)\\
f(0,1)\\
f(1,1)
\end{pmatrix}
\quad (\text{over } \mathbb{F}_2), \qquad 
M_1:=
\begin{pmatrix}
1&0\\
1&1
\end{pmatrix}.
\]
\end{example}

\subsection{Conversions and the Tiling Mismatch}\label{sec:conversion}
\noindent\textbf{Boolean function$\leftarrow$CNF, ANF.} The truth table of a CNF or ANF formula can be obtained by evaluating the formula on all assignments. In the case of ANF, the truth table can also be recovered by applying the inverse M\"obius transform to the polynomial coefficients.

\noindent\textbf{Boolean function$\rightarrow$CNF, ANF.} A Boolean function can be converted to CNF via the product of maxterms. Specifically, the rows of the truth table for which $f$ evaluates to $0$ are selected, and each row is translated into a clause (maxterm) where a variable appears unnegated if its value in the row is $0$, and negated if its value is $1$. The conjunction of all such clauses yields a CNF formula equivalent to $f$. This construction is also illustrated in \autoref{representations}. The conversion to ANF is obtained by expressing $f$ as a Boolean polynomial whose coefficients are computed from the truth table of $f$ via the M\"obius transform. Notably, if a Boolean function is constant, its ANF has a particularly simple form: if $f$ is $0$, then all coefficients are $0$, while if $f$ is $1$, only the coefficient $\alpha_{\emptyset}$ is equal to $1$. In propositional terms, these two cases correspond to an unsatisfiable formula and a tautology, respectively.

\noindent\textbf{CNF$\rightarrow$ANF.} Two standard strategies for this conversion are described in~\cite{horacek2020conver}. The first strategy~\cite{hsiang1985refutational} performs a \emph{direct} conversion by translating each clause into its ANF representation and then combining the polynomials from all clauses through multiplication, thus producing a final factorized polynomial. To compute $\mathrm{ANF}(C_i)$, it is convenient to first consider the negation $\mathrm{ANF}(\neg C_i)$. By De Morgan's law, the negated clause $\neg C_i=\bigwedge_{j=1}^{n_i}\neg L_{i,j}$ can be expressed as the product of the ANF representations of the negated literals in the clause. For a positive literal $L_{i,j}=x$, the negation $\neg L_{i,j}$ corresponds to the polynomial $1+x$, while for a negative literal $L_{i,j}=\neg x$, the negation simplifies to $x$. Expanding the resulting product and adding the constant term $1$ yields $\mathrm{ANF}(C_i)$. However, the number of monomials in the resulting ANF grows exponentially with the number of positive literals.

The second strategy, proposed in~\cite{alekhnovich2002space}, avoids this exponential blowup by exploiting the fact that clauses containing only negative literals translate into very small ANF polynomials, whereas clauses with many positive literals produce polynomials with exponentially many monomials. The strategy therefore introduces an encoding over an expanded domain of variables in which positive literals are replaced by auxiliary \emph{twin variables}. These variables are constrained so that $x_i' = 1 + x_i$, allowing positive literals to be rewritten as negative ones. The direct conversion can then be applied to the transformed formula, whose clauses contain only negative literals (or significantly fewer positive ones), yielding a factorized polynomial with small factors. In the worst case, this approach can double both the number of variables and the number of resulting polynomials. A practical illustration of these two strategies is provided in \autoref{conversions}. We refer the interested reader to~\cite{horacek2020conver} for more advanced conversions building on top of these two strategies.

\begin{example}\label{conversions}
    Consider the formula $\phi=(x_1\lor x_2)\land(\neg x_1\lor\neg x_2\lor \neg x_4)\land(\neg x_1\lor\neg x_3\lor x_4)\land(x_1\lor x_2 \lor x_3)$.
    The first strategy, i.e. \emph{direct} conversion, performs the following sequence of steps (fully shown only for the first three clauses):
    \begin{align}
        (x_1\lor x_2)
        &\rightarrow \neg(\neg x_1\land \neg x_2)
        \rightarrow 1+((1+x_1)(1+x_2))
        \rightarrow x_1+x_2+x_1x_2, \nonumber \\
        (\neg x_1\lor\neg x_2\lor \neg x_4)
        &\rightarrow \neg(x_1\land x_2 \land x_4)\rightarrow 1+(x_1x_2x_4)
        \rightarrow 1+x_1x_2x_4. \nonumber\\
        (\neg x_1\lor \neg x_3\lor x_4)
        &\rightarrow \neg(x_1\land x_3\land \neg x_4)
        \rightarrow 1+(x_1x_3(1+x_4))
        \rightarrow 1+x_1 x_3+x_1 x_3x_4, \nonumber\\
        (x_1\lor x_2\lor x_3)
        &\rightarrow \ldots
        \rightarrow \ldots
        \rightarrow x_1+x_2+x_3+x_1x_2+x_1x_3+x_2x_3+x_1x_2x_3, \nonumber
        \end{align}
    thus resulting in the factorized Boolean polynomial $p=(x_1+x_2+x_1x_2)(1+x_1x_2x_4)(1+x_1 x_3+x_1 x_3x_4)(x_1+x_2+x_3+x_1x_2+x_1x_3+x_2x_3+x_1x_2x_3)$. Observe that clauses containing only negative literals translate into very small polynomials, whereas clauses with many positive literals expand into polynomials with many monomials. This difference motivates the second strategy, which replaces positive literals with twin variables so that most clauses become negative.
    
    The second approach introduces the set of twin variables $\{x_i':=1+x_i\mid \forall i\in[4]\}$ which enable to equivalently rewrite $\phi$ using a new formula $\phi'=(\neg x_1'\lor \neg x_2')\land(\neg x_1\lor\neg x_2\lor \neg x_4)\land(\neg x_1\lor \neg x_3\lor\neg x_4')\land(\neg x_1'\lor \neg x_2'\lor\neg x_3')$ with no positive literals. Now, by leveraging the \emph{direct} conversion on $\phi'$, one obtains the polynomial $p'=(1+x_1'x_2')(1+x_1x_2x_4)(1+x_1x_3x_4')(1+x_1'x_2'x_3')$. Importantly, additional polynomials are included in $p'$ to enforce the constraints on the twin variables, yielding the final polynomial $p''= p'(x_1+x_1')(x_2+x_2')(x_3+x_3')(x_4+x_4')$. While $p''$ has fewer monomials in each factor compared to $p$, the number of variables and factors increases due to the introduction of the twin variables. Notably, by substituting the twin variables according to the constraints $x_i' = 1 + x_i$ in $p''$ (or directly in $p'$), one can verify that $p'',p',p$ all simplify to the same Boolean polynomial.
\end{example}

\noindent\textbf{CNF$\leftarrow$ANF.} Two standard strategies for this conversion are described in~\cite{horacek2020conver}. The first strategy is \emph{naive} in the sense that it converts the polynomial into an explicit Boolean function before deriving the CNF. Concretely, it performs the transformation ANF$\rightarrow$Boolean, followed by the Boolean$\rightarrow$CNF conversion. Clearly, this approach does not scale well with the number of variables.

The second strategy addresses the scalability issue by decomposing the problem into smaller, more tractable subproblems, to which the naive strategy can still be applied. To achieve this decomposition, the domain of variables is expanded in two stages. In the first stage, auxiliary variables are introduced to replace all monomials of degree greater than one, thereby linearizing the original polynomial.\footnote{Linearization enables the use of Gaussian elimination to further simplify the polynomials.} In the second stage, additional auxiliary variables are introduced to split the polynomial into smaller chunks / tiles, reducing the number of monomials and enabling the naive conversion. The definitions of the auxiliary variables are encoded through additional clauses, which contribute to the total number of clauses in the resulting CNF. In the worst case, this number grows exponentially with the size of the chunks~\cite{horacek2020conver}. A practical illustration of the domain-expansion strategy is shown in \autoref{inverse_conversions}.

\begin{example}\label{inverse_conversions}
    Given a polynomial $p=1+x_1+x_2+x_3+x_1 x_2$, linearization produces the linear polynomial $p'=1+x_1+x_2+x_3+x_1'$ by introducing the auxiliary variable $x_1':=x_1 x_2$. The constraint on $x_1'$ is converted to the CNF $(x_1'\lor \neg x_1\lor\neg x_2)\land(\neg x_1'\lor x_1)\land(\neg x_1'\lor x_2)$. Subsequently, polynomial $p'$ can be split into chunks / tiles of arbitrary size $k$ ($k=3$ in this example) through the additional variable $x_1'':=x_3+x_1'$. This yields two polynomials $p''=1+x_1+x_2+x_1''$ and $p'''=1+x_1''+x_3+x_1'$, where the equation $p'''=1$ encodes the definition of $x_1''$. After splitting, the polynomials $p'',p'''$ are converted into CNF using the naive strategy. All three CNFs, i.e. the one constraining $x_1'$ and the ones from $p'',p'''$, are finally conjoined together to produce the final representation.
\end{example}

\begin{remark}
 Both ANF$\rightarrow$CNF and ANF$\leftarrow$CNF conversions
 may incur an exponential blowup. In order to avoid this, 
 the problem needs to be tiled into
 smaller fragments that can be converted
 independently. This is done through 
 the introduction of auxiliary variables and corresponding constraints. Before the problem has been decomposed in this way, we say there is a \emph{tiling mismatch} between the current form 
 of the problem and tile size that can be supported by the conversion. Importantly, the introduction of variables and constraints needed to handle the tiling mismatch introduces a computational overhead, both in terms of memory and time. This is critical for SAT solvers or algebraic proof systems~\cite{biere2021handbook,bright2022sat}.  
\end{remark}

\section{Power Term Polynomial Algebra}
\label{sec:power}
We introduce an abstraction language to address the \emph{tiling mismatch}. This language (i) compactly represents Boolean polynomials through monomial grouping and (ii) encodes CNF formulae without introducing additional variables or constraints. It also provides composition rules based on addition and multiplication, enabling direct manipulation of expressions within the abstract representation.
We begin by introducing the fundamental building blocks of the language, \textit{power terms}. We then show how these can be composed to form \textit{power term polynomials}. Finally, we establish that power term polynomials form an algebra and present rules for manipulating them directly within this representation space.

\subsection{Power Terms}
\begin{figure}[t]
\centering
\begin{tabular}{lllllll}
\textbf{Boolean Poly.} & & \textbf{Set of Sets} & & \textbf{Power Set} & & \textbf{Power Term} \\
$x_1$ &$\rightarrow$ & $\{\{x_1\}\}$ & $\rightarrow$ & $\mathbf{P}_1$ & $\rightarrow$ & $S_1.\mathbf{P}_\emptyset$\\
$x_1x_2$ &$\rightarrow$ & $\{\{x_1,x_2\}\}$ & $\rightarrow$ &  $\times$ & $\rightarrow$ & $S_{1,2}.\mathbf{P}_\emptyset$\\
$x_1{+}x_2{+}x_1x_2$ & $\rightarrow$ & $\{\{x_1\},\{x_2\},\{x_1,x_2\}\}$  & $\rightarrow$ & $\mathbf{P}_{1,2}$ & $\rightarrow$ & $S_\emptyset.\mathbf{P}_{1,2}$\\
$x_3(x_1{+}x_2{+}x_1x_2)$ & $\rightarrow$ & $\{\{x_1{,}x_3\}{,}\{x_2{,}x_3\}{,}\{x_1{,}x_2{,}x_3\}\}$  & $\rightarrow$ & $\times$ & $\rightarrow$ & $S_3.\mathbf{P}_{1,2}$
\end{tabular}
\caption{Abstraction steps illustrating the transformation of Boolean polynomials into power terms. The symbol $\times$ indicates that the corresponding set of sets representation cannot be expressed as a single power set.}
\label{ex:terms}
\end{figure}
Our goal is to represent Boolean polynomials compactly as a single structured term. To this end, we interpret each monomial as the set of variables appearing in it. For example, the monomial $x_1x_2$ is represented by the set $\{x_1,x_2\}$. Under this interpretation, a Boolean polynomial becomes a set of sets, where each inner set corresponds to a monomial.

This viewpoint reveals a combinatorial structure closely related to power sets, which we exploit to obtain a more compact abstraction. The correspondence is illustrated in \autoref{ex:terms}.
\begin{example}
Consider the Boolean polynomials and their corresponding set of sets representations shown in the two leftmost columns of \autoref{ex:terms}. The first two rows correspond to single monomials. The last two rows illustrate families of monomials corresponding to subsets of the variable set $\{x_1,x_2\}$. For instance, the polynomial $x_1 + x_2 + x_1x_2$ corresponds to the set $\{\{x_1\},\{x_2\},\{x_1,x_2\}\}$ which is precisely the power set of $\{x_1,x_2\}$ without the empty set.
\end{example}

Since the standard power set includes the empty set, we use a modified variant that excludes it. This matches the structure of our set of sets representation, where the empty monomial does not appear. Bold notation is used to distinguish power sets from ordinary sets.

\begin{definition}[Power Set]\label{power_set} For a set $U\subseteq\mathcal{D}$, the (modified) power set is defined as
\[
\mathbf{P}_U :=\begin{cases}
\{\emptyset\} & \text{if } U = \emptyset,\\
\{T \subseteq U \mid T \neq \emptyset\} & \text{otherwise}.
\end{cases}
\]
\end{definition}
\begin{example}
\autoref{power_set} allows several of the set of sets representations in \autoref{ex:terms} to be expressed more compactly. In particular, the expressions $\{\{x_1\}\}$, $\{\{x_1\},\{x_2\},\{x_1,x_2\}\}$ can be written as $\mathbf{P}_1$, and $\mathbf{P}_{1,2}$, respectively. For readability, we write $\mathbf{P}_{i_1,\ldots,i_k}$ instead of $\mathbf{P}_{\{x_{i_1},\ldots,x_{i_k}\}}$ when the underlying variables are clear. Importantly, expressions such as $\{\{x_1,x_2\}\}$ and $\{\{x_1,x_3\}{,}\{x_2,x_3\}{,}\{x_1,x_2,x_3\}\}$ cannot be represented as a single power set.
\end{example}

Power sets alone are not sufficient, since we must still distinguish single monomials from structured families of monomials. This motivates the notion of a power term.

\begin{definition}[Power Term]\label{power_term}
Let $S,U\subseteq\mathcal{D}$ such that
\begin{enumerate}
    \item $S\cap U=\emptyset$,
    \item $|U|\neq 1$,
    \item $(S,U)\neq(\emptyset,\emptyset)$.
\end{enumerate}
The power term generated by $S$ and $\mathbf{P}_U$ is the set: 
\[
S.\mathbf{P}_U:=\{S\cup T\mid T\in\mathbf{P}_U\}
\]
The degree of the power term $S.\mathbf{P}_U$ is defined as $|U|$. 
\end{definition}
Condition~1 ensures that the base set $S$ does not overlap with the variables appearing in the combinations generated by $\mathbf{P}_U$.
Condition~2 prevents the case $|U|=1$. Indeed, if $|U|=1$ the generated family contains exactly one monomial, which could equivalently be represented as a degree-0 power term by moving the element of $U$ into $S$. Intuitively, variables in $S$ appear in every generated monomial, while variables in $U$ may appear in any non-empty subset.
Condition~3 excludes the trivial case where both $S$ and $U$ are empty.
\begin{example}
The first two monomials, or equivalently set of sets representations, in \autoref{ex:terms} correspond to single monomials and can therefore be expressed using power terms of degree $0$. 
In contrast, the last two compound expressions correspond to families of monomials and are represented using power terms of degree greater than $1$, in accordance with Condition~2 in \autoref{power_term}.
\end{example}
\begin{figure}
    \centering
    \begin{tabular}{rcl|rll}
     \multicolumn{3}{c}{\textbf{Syntax}} & \multicolumn{3}{c}{\textbf{Evaluation}} \\
     \\
     \shade{\texttt{expr}}   & $::=$ & & \shade{\eval{\texttt{expr}}} & = & \\
     & & \mshade{\texttt{expr}\;\odot\;\texttt{expr}} & & & \shade{\eval{\texttt{expr}}$\cdot$\eval{\texttt{expr}}}\\
     & & \shade{\texttt{ptpoly}} & & & \shade{\eval{\texttt{ptpoly}}}\\
     \shade{\texttt{ptpoly}}   & $::=$ & & \shade{\eval{\texttt{ptpoly}}} & = & \\
     & & \mshade{\texttt{ptpoly}\;\uplus\;\texttt{ptpoly}} & & & \shade{\eval{\texttt{ptpoly}}$+$\eval{\texttt{ptpoly}}}\\
    & & \mshade{S_I\;.\;\mathbf{P}_\emptyset} & & & \mshade{1}\\     
    & & \mshade{S_\emptyset\;.\;\mathbf{P}_\emptyset} & & & \mshade{0}\\
     & & \shade{\texttt{pterm}} & & & \shade{\eval{\texttt{pterm}}}\\
     \shade{\texttt{pterm}}     & $::=$ & & \shade{\eval{\texttt{pterm}}} & = & \\
     & & \mshade{S\;.\;\mathbf{P}_U\;\text{ (with }S,U\subseteq\mathcal{D})} & & & \shade{$\sum_{T\in S.\mathbf{P}_U}\prod_{x\in T}x$}
    \end{tabular}
    \caption{Syntax and (denotational) semantics of the abstraction language based on power term polynomials. A power term polynomial generated by this grammar is well formed if every power term $S.\mathbf{P}_U$ satisfies the conditions of Definition~\ref{power_term}. The operators $\uplus$ and $\odot$ satisfy the algebraic laws given in their respective definitions and are interpreted as addition and multiplication over $\mathbb{F}_2$, respectively.}
    \label{fig:language}
\end{figure}

Having introduced power terms, we now describe how they are represented within the abstraction language. Figure~\ref{fig:language} presents the corresponding grammar. A \texttt{pterm} has the form $S.\mathbf{P}_U$, where $S,U\subseteq\mathcal{D}$ satisfy the conditions of \autoref{power_term}. 
Its semantics is the Boolean polynomial obtained by enumerating all monomials generated by the family $S.\mathbf{P}_U$, i.e., each monomial contains all variables in $S$ together with a non-empty subset of variables from $U$.

\begin{example}
The power term $S_\emptyset.\mathbf{P}_{1,2}$ represents $x_1 + x_2 + x_1x_2$, while 
$S_3.\mathbf{P}_{1,2}$ represents $x_1x_3 + x_2x_3 + x_1x_2x_3$. 
Single monomials correspond to degree-$0$ power terms, e.g., $S_{1,2}.\mathbf{P}_\emptyset$ represents $x_1x_2$.
\end{example}

\subsection{Power Term Polynomials}
Power terms provide a compact representation of structured families of monomials. However, a single power term cannot in general represent an arbitrary Boolean polynomial. To obtain a complete representation language, we therefore allow power terms to be combined using \emph{exclusive disjunction}. The resulting expressions are called \emph{power term polynomials}. Intuitively, a power term polynomial represents a Boolean polynomial as the addition over $\mathbb{F}_2$ of several structured families of monomials.

\begin{definition}[Power Term Polynomial]\label{power_term_polynomial}
Let $\mathcal{T}$ denote the set of all well-formed power terms. In addition, we introduce the constants $S_\emptyset.\mathbf{P}_\emptyset$ and
$S_I.\mathbf{P}_\emptyset$, written in the same syntactic form as power terms but treated as constants of the language. These constants denote the Boolean constants $0$ and $1$, respectively, and are part of the language but are not
required to satisfy the well-formedness conditions of
Definition~\ref{power_term}.

A power term polynomial corresponds to the syntactic category
\texttt{ptpoly} of the grammar shown in \autoref{fig:language}. Concretely, a power term polynomial is an expression generated by the grammar
\[
\pi ::= S_\emptyset.\mathbf{P}_\emptyset \mid S_I.\mathbf{P}_\emptyset \mid t \mid \pi \uplus \pi
\]
where $t \in \mathcal{T}$. The binary operator $\uplus$ is a syntactic constructor satisfying the following equational axioms:
\begin{align}
    \pi \uplus \rho &= \rho \uplus \pi && \text{(Commutativity)} \nonumber\\
    (\pi \uplus \rho) \uplus \sigma &= \pi \uplus (\rho \uplus \sigma) && \text{(Associativity)} \nonumber\\
    \pi \uplus S_\emptyset.\mathbf{P}_\emptyset &= \pi && \text{(Additive Identity)} \nonumber\\
    \pi \uplus \pi &= S_\emptyset.\mathbf{P}_\emptyset && \text{(Additive Inverse)} \nonumber
\end{align}
for all power term polynomials $\pi,\rho,\sigma$.
\end{definition}

Thus power terms serve as the atomic generators of the language, while more complex expressions are obtained by combining them through exclusive disjunction. It follows immediately that power term polynomials are closed under exclusive disjunction. The semantics of a power term polynomial is obtained by evaluating each power term and combining the resulting Boolean polynomials using addition over $\mathbb{F}_2$:
\[
\meval{\pi \uplus \rho} = \meval{\pi} + \meval{\rho}, \qquad\meval{S_\emptyset.\mathbf{P}_\emptyset}=0, \qquad \meval{S_I.\mathbf{P}_\emptyset}=1.
\]
This is also summarized in \autoref{fig:language}.
Since multiple power term polynomials may represent the same Boolean polynomial, we measure the complexity of a representation by counting its power terms.

\begin{definition}[Size, Minimality]
Let $\pi$ be a power term polynomial. The \emph{size} of $\pi$, denoted as $\mathrm{Size}(\pi)$, is the number of atomic terms (power terms or constants) occurring in $\pi$. A power term polynomial $\pi$ representing a Boolean polynomial $p$ is minimal if for every power term polynomial $\rho$ representing $p$, $\mathrm{Size}(\pi)\leq\mathrm{Size}(\rho)$.
\end{definition}

\begin{example}
Consider the Boolean polynomial $p = 1 + x_1 + x_2 + x_1x_2$. This can be encoded using different power term polynomials. For instance, $\pi:=S_I.\mathbf{P}_\emptyset\uplus S_1.\mathbf{P}_\emptyset\uplus S_2.\mathbf{P}_\emptyset\uplus S_{1,2}.\mathbf{P}_\emptyset$ and $\rho:=S_I.\mathbf{P}_\emptyset\uplus S_\emptyset.\mathbf{P}_{1,2}$. Both $\pi,\rho$ are semantically equivalent, with $\mathrm{Size}(\pi)=4$ and $\mathrm{Size}(\rho)=2$. Notably, $\rho$ is minimal.
\end{example}

\paragraph{Properties of Power Term Polynomials}
We now describe three structural properties of power term polynomials that characterize their expressive power and limitations.

\begin{proposition}[Property 1: Lack of General Canonicity]\label{prop:noncanonical}
Minimal power term polynomials are not canonical with respect to Boolean polynomials.
\end{proposition}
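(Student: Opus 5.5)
The plan is to refute canonicity with one explicit counterexample. I will exhibit a Boolean polynomial $p$ that has two minimal power term polynomials which are syntactically different, meaning they cannot be identified by the equational axioms of \autoref{power_term_polynomial}. Those axioms only allow reordering of $\uplus$-summands, dropping $S_\emptyset.\mathbf{P}_\emptyset$, and cancelling two identical summands. So two expressions with no constant $0$ and no repeated atom are equal in the language only if they have the same multiset of atomic terms. My candidate is $p = x_1 + x_2$.

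\textbf{Step 1: minimal size is $2$.} First I count the monomials that a single atom can denote. The constants denote $0$ or $1$, i.e. zero monomials or the single monomial $1$. A degree-$0$ power term $S.\mathbf{P}_\emptyset$ with $S\neq\emptyset$ denotes exactly one monomial. By Condition~2 of \autoref{power_term}, a power term of positive degree has $|U|\geq 2$, so it denotes $2^{|U|}-1\geq 3$ distinct monomials, and none of them cancel because the sets $S\cup T$ are pairwise distinct. Hence no single atom denotes $p$, which has exactly two monomials, and $\mathrm{Size}\geq 2$ for every representation of $p$.

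\textbf{Step 2: two distinct representations of size $2$.} The first is $\pi := S_1.\mathbf{P}_\emptyset \uplus S_2.\mathbf{P}_\emptyset$, whose semantics is clearly $x_1+x_2$. The second is $\rho := S_\emptyset.\mathbf{P}_{1,2} \uplus S_{1,2}.\mathbf{P}_\emptyset$. Its semantics is $(x_1+x_2+x_1x_2) + x_1x_2 = x_1+x_2$ over $\mathbb{F}_2$. Both atoms of $\rho$ are well formed: $S\cap U=\emptyset$, $|U|\in\{0,2\}$, and $(S,U)\neq(\emptyset,\emptyset)$. By Step~1, both $\pi$ and $\rho$ are minimal.

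\textbf{Step 3: $\pi$ and $\rho$ are not identified.} The multisets of atoms are $\{S_1.\mathbf{P}_\emptyset, S_2.\mathbf{P}_\emptyset\}$ and $\{S_\emptyset.\mathbf{P}_{1,2}, S_{1,2}.\mathbf{P}_\emptyset\}$, which are different. Neither expression contains a constant $0$ or a repeated atom, so the axioms cannot reduce either one further. Therefore $\pi$ and $\rho$ are distinct minimal representations of the same Boolean polynomial, which proves the proposition.

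The only real obstacle is making precise that ``distinct'' means distinct modulo the axioms; the observation above about the available rewrites handles this.
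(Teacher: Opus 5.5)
Your proposal is correct and follows the paper's proof, which uses the same counterexample $p=x_1+x_2$ with the same representations $S_1.\mathbf{P}_\emptyset\uplus S_2.\mathbf{P}_\emptyset$ and $S_\emptyset.\mathbf{P}_{1,2}\uplus S_{1,2}.\mathbf{P}_\emptyset$ (plus a second, three-term example); your Steps~1 and~3 supply the size lower bound and the non-identification under the $\uplus$ axioms, both of which the paper only asserts. In Step~3, equational reasoning may also apply the axioms right-to-left, so ``neither expression can be reduced further'' is not by itself a proof that they differ. The clean justification is that every axiom preserves the parity of the number of occurrences of each atom other than $S_\emptyset.\mathbf{P}_\emptyset$.
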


\begin{proof}
We prove this by counterexample. Consider the Boolean polynomial $p := x_1 + x_2$. Two minimal representations are $\pi:=S_1.\mathbf{P}_\emptyset \uplus S_2.\mathbf{P}_\emptyset$ and $\rho:=S_\emptyset.\mathbf{P}_{1,2} \uplus S_{1,2}.\mathbf{P}_\emptyset$. Both contain two power terms and are therefore minimal representations.

As a second example, consider $p' := x_1 + x_2 + x_3 + x_1x_2 + x_1x_3$. Two minimal representations are $\pi':=S_\emptyset.\mathbf{P}_{1,2,3} \uplus S_{1,2,3}.\mathbf{P}_\emptyset \uplus S_{2,3}.\mathbf{P}_\emptyset$ and  $\rho':=S_\emptyset.\mathbf{P}_{1,2} \uplus S_{3}.\mathbf{P}_\emptyset \uplus S_{1,3}.\mathbf{P}_\emptyset$.
Both contain three power terms and are therefore minimal.
Hence minimal power term polynomials are not unique.
\end{proof}

\begin{proposition}[Property 2: Canonicity of Disjunctive Formulae]\label{prop:canonical-disjunction}
For any disjunctive formula $\phi$, the corresponding minimal power term polynomial $\pi$ provides a canonical construction for $\phi$. Moreover, $\mathrm{Size}(\pi)\leq 3$, independently of the number of positive literals (conversion rules from a disjunctive formula to a power term polynomial are given in the proof).
\end{proposition}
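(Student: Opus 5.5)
Write the disjunctive formula (a single clause) as $\phi = \bigvee_{i\in N}\neg x_i \lor \bigvee_{j\in P} x_j$, with $N,P\subseteq\mathcal{D}$ the variables occurring negatively and positively. The plan is to compute $\mathrm{ANF}(\phi)$ in closed form via the direct conversion of Section~\ref{sec:conversion} and then read off a power term polynomial.

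\textbf{Step 1: closed form.} By De Morgan, $\neg\phi = \bigwedge_{i\in N}x_i \land \bigwedge_{j\in P}\neg x_j$. Its ANF is
\[
m_N\prod_{j\in P}(1+x_j)=\sum_{T\subseteq P} m_{N\cup T}.
\]
Hence $\mathrm{ANF}(\phi)=1+m_N+\sum_{\emptyset\neq T\subseteq P} m_{N\cup T}$.

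\textbf{Step 2: translation and case analysis.} The nonempty-subset sum is exactly $\meval{S_N.\mathbf{P}_P}$, by Definition~\ref{power_term} and the semantics in \autoref{fig:language}. I then split on $N$ and $|P|$ so that every term is well formed.
\begin{itemize}
\item If $|P|\geq 2$, conditions 1--3 of Definition~\ref{power_term} hold since $N\cap P=\emptyset$. This gives $\pi = S_I.\mathbf{P}_\emptyset\uplus S_N.\mathbf{P}_\emptyset\uplus S_N.\mathbf{P}_P$.
\item If $N=\emptyset$ and $|P|\geq 2$, then $1+m_\emptyset=0$, and $\pi$ collapses to $S_\emptyset.\mathbf{P}_P$ alone.
\item If $|P|=1$ with $P=\{x_j\}$, the middle term $S_N.\mathbf{P}_P$ is replaced by the degree-0 term $S_{N\cup\{j\}}.\mathbf{P}_\emptyset$. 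The result is $S_I.\mathbf{P}_\emptyset\uplus S_N.\mathbf{P}_\emptyset\uplus S_{N\cup\{j\}}.\mathbf{P}_\emptyset$ when $N\neq\emptyset$, and $S_j.\mathbf{P}_\emptyset$ when $N=\emptyset$.
\item If $P=\emptyset$, then $\pi=S_I.\mathbf{P}_\emptyset\uplus S_N.\mathbf{P}_\emptyset$.
\end{itemize}
In every case $\mathrm{Size}(\pi)\leq 3$, independently of $|P|$. The case split itself is the canonical construction, since it depends only on the pair $(N,P)$.

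\textbf{Step 3: minimality.} I expect this to be the main obstacle: showing these representations are minimal, which is what justifies calling them \emph{the} minimal power term polynomial.

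Size 1 and 2 cases are easy. A one-term representation exists only if the ANF is a single power term or constant. For the generic three-term case I would argue by monomial counting and structure. The ANF contains the constant $1$, the monomial $m_N$, and the family $\{m_{N\cup T}\}$ for nonempty $T\subseteq P$. A single power term never contains the empty monomial. Two atoms would therefore have to consist of $S_I.\mathbf{P}_\emptyset$ and a single power term $S.\mathbf{P}_U$ producing $\{N\}\cup\{N\cup T\}$. Such a term would need the minimal element $N$ to be $S\cup T_0$ with a unique minimal generated set. That forces $|U|\le 1$, contradicting $|P|\ge 1$ and condition~2. So two atoms are impossible and $\pi$ is minimal.

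Canonicity would then be argued from uniqueness within this restricted form. A cleaner route is to accept the statement's weaker reading: canonicity means a deterministic construction from $\phi$, not uniqueness of minimal forms, since Proposition~\ref{prop:noncanonical} shows the latter fails in general. I would then note that $N\cup T$ must be the generated family of exactly one admissible power term, which pins down $\pi$ up to the axioms of $\uplus$.
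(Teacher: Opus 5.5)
Your Steps 1 and 2 follow the paper's proof. You use the same direct conversion to $1+m_N\prod_{j\in P}(1+x_j)$ and the same case split (the paper's $k=|P|$, $\ell=|N|$), and you obtain the same encodings of size $1$, $2$ and $3$. Step 3 goes beyond the paper, which only asserts that the three-term encoding is minimal. Your argument is sound, and is worth keeping:
\begin{itemize}
\item no power term contains the empty monomial, so one atom must be $S_I.\mathbf{P}_\emptyset$;
\item the other atom cannot be a constant;
\item a power term has no internal cancellation, so that atom would have to generate exactly $\{N\cup T : T\subseteq P\}$;
\item this family has the unique minimum $N$ and at least two members, whereas a well-formed power term has either one member ($U=\emptyset$) or $|U|\ge 2$ minimal members.
\end{itemize}

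Two points need fixing. \textbf{First}, your closing claim that this pins down $\pi$ up to the axioms of $\uplus$ is false. A minimal representation may use overlapping atoms that cancel. With $F_W:=\sum_{\emptyset\neq A\subseteq W}m_A$ one has $F_{N\cup P}+F_P=F_N\prod_{p\in P}(1+x_p)$, and $F_N=m_N$ when $|N|=1$. So every clause with exactly one negative literal has a second minimal form. For $\neg x_1\lor x_2\lor x_3$, both $S_I.\mathbf{P}_\emptyset\uplus S_1.\mathbf{P}_\emptyset\uplus S_1.\mathbf{P}_{2,3}$ and $S_I.\mathbf{P}_\emptyset\uplus S_\emptyset.\mathbf{P}_{1,2,3}\uplus S_\emptyset.\mathbf{P}_{2,3}$ denote $1+x_1+x_1x_2+x_1x_3+x_1x_2x_3$ and have size $3$. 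By your own Step 3, both are minimal. They are related only through Proposition~\ref{shorten_expand} (Case~2), not through the $\uplus$ axioms. What is true is uniqueness among cancellation-free (disjoint) decompositions, which may be what you meant by ``the restricted form.'' The paper's proof establishes nothing beyond the weaker reading, namely a fixed conversion rule whose output is minimal, so you should stop there.

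\textbf{Second}, your first bullet must assume $N\neq\emptyset$. In this language $S_\emptyset.\mathbf{P}_\emptyset$ is the constant $0$, not the monomial $m_\emptyset=1$. So for $N=\emptyset$ that formula evaluates to $1+F_P$, and nothing ``collapses'' syntactically. Treat $N=\emptyset$ as a separate case from the start, as the paper does with $\ell=0$. Likewise, your $P=\emptyset$ case presupposes $N\neq\emptyset$.
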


\begin{proof}
Consider a general disjunctive formula, with positive and negative literals grouped on the left and right sides of the formula, respectively.
\[
\phi:=\bigvee_{i=1}^k x_i \lor\bigvee_{j=k+1}^{k+\ell} \neg x_j
\]
We first convert $\phi$ into a Boolean polynomial $p$ using the direct conversion mentioned in \autoref{sec:conversion}.
\[
p=1+\bigg(\prod_{i=1}^k(1+x_i)\bigg)\cdot\bigg(\prod_{j=k+1}^{k+\ell} x_j\bigg)
\]
We distinguish three cases:
\smallskip
\begin{itemize}
    \item \textbf{Case 1} ($\ell=0$): $p=1+\prod_{i=1}^{k} (1+x_i)$. By expanding the polynomial, we are left with the sum of all possible combinations of variables in $\{x_{1},\ldots,x_{k}\}$. This can be captured using power set notation. In other words,  $p=\sum_{U\in\mathbf{P}_{1,\ldots,k}}\prod_{i\in U}x_i$. The Boolean polynomial $p$ can be minimally encoded using $\pi=S_\emptyset.\mathbf{P}_{1,\ldots,k}$ for $k>1$ and $\pi=S_1.\mathbf{P}_\emptyset$ for $k=1$. Therefore, $\mathrm{Size}(\pi)=1$.
    \smallskip
    \item \textbf{Case 2} ($k=0$): $p=1+\prod_{j=1}^\ell x_j$. Since the polynomial contains both the constant term and the monomial $\prod_{j=1}^\ell x_j$, any representation requires at least two atomic terms. This admits the minimal representation $\pi=S_I.\mathbf{P}_\emptyset\uplus S_{1,\ldots,\ell}.\mathbf{P}_\emptyset$. Hence, $\mathrm{Size}(\pi)=2$. 
    \smallskip
    \item \textbf{Case 3} ($k>0$ and $\ell>0$): this is the general case. $p$ can be rewritten in the following way:
    \begin{align}
    p&=1+\bigg(\prod_{i=1}^k(1+x_i)\bigg)\cdot\bigg(\prod_{j=k+1}^{k+\ell} x_j\bigg)=1+\bigg(1+\sum_{U\in\mathbf{P}_{1,\ldots,k}}\prod_{i\in U}x_i\bigg)\cdot\bigg(\prod_{j=k+1}^{k+\ell} x_j\bigg) \nonumber\\
    &=1+\prod_{j=k+1}^{k+\ell} x_j+\prod_{j=k+1}^{k+\ell} x_j\sum_{U\in\mathbf{P}_{1,\ldots,k}}\prod_{i\in U}x_i \nonumber
    \end{align}
    The last equality can be minimally encoded using the power term polynomial $\pi=S_I.\mathbf{P}_\emptyset\uplus S_{k+1,\ldots,k+\ell}.\mathbf{P}_\emptyset\uplus S_{k+1,\ldots,k+\ell}.\mathbf{P}_{1,\ldots,k}$ for $k>1$ and $\pi=S_I.\mathbf{P}_\emptyset\uplus S_{2,\ldots,\ell+1}.\mathbf{P}_\emptyset\uplus S_{1,\ldots,\ell+1}.\mathbf{P}_\emptyset$ for $k=1$. Therefore, $\mathrm{Size}(\pi)=3$.
\end{itemize}
This concludes the proof.
\end{proof}

\begin{example}
We illustrate \autoref{prop:canonical-disjunction} using the formula
\[
\phi=(x_1\lor x_2)\land(\neg x_1\lor\neg x_2\lor \neg x_4)\land(\neg x_1\lor\neg x_3\lor x_4)\land(x_1\lor x_2 \lor x_3)
\]
introduced in \autoref{conversions}. By \autoref{prop:canonical-disjunction}, each clause admits a canonical power term polynomial representation of size at most $3$. Concretely, the four clauses are represented as
\begin{align}
x_1\lor x_2 \;&\mapsto\; S_\emptyset.\mathbf{P}_{1,2}, \nonumber\\
\neg x_1\lor\neg x_2\lor \neg x_4 \;&\mapsto\; S_I.\mathbf{P}_\emptyset \uplus S_{1,2,4}.\mathbf{P}_\emptyset, \nonumber\\
\neg x_1\lor\neg x_3\lor x_4 \;&\mapsto\; S_I.\mathbf{P}_\emptyset \uplus S_{1,3}.\mathbf{P}_\emptyset \uplus S_{1,3,4}.\mathbf{P}_{\emptyset}, \nonumber\\
x_1\lor x_2 \lor x_3 \;&\mapsto\; S_\emptyset.\mathbf{P}_{1,2,3}. \nonumber
\end{align}
This representation avoids the introduction of auxiliary variables or constraints, while still providing a compact encoding of all clauses and thereby addressing the \emph{tiling mismatch}.
\end{example}

\begin{proposition}[Property 3: Shortening / Expanding Power Term Polynomials]\label{shorten_expand}
Let $S,T,V\subseteq\mathcal{D}$ be pairwise disjoint sets. Then, the following identities hold:
\[
S.\mathbf{P}_{T\cup V}=
\begin{cases}
    (S\cup T).\mathbf{P}_\emptyset\,\uplus\,(S\cup T\cup V).\mathbf{P}_\emptyset\,\uplus\,(S\cup V).\mathbf{P}_\emptyset & \text{if }|T|=1\text{ and }|V|=1\text{ (Case 1)}\\
    S.\mathbf{P}_T\,\uplus\,(S\cup V).\mathbf{P}_T\,\uplus\,(S\cup V).\mathbf{P}_\emptyset & \text{if }|T|>1\text{ and }|V|=1\text{ (Case 2)}\\
   \biguplus_{\widetilde{V}\in\mathbf{P}_V}(S\cup\widetilde{V}).\mathbf{P}_T\,\uplus\, S.\mathbf{P}_T\uplus S.\mathbf{P}_V & \text{if }|T|>1\text{ and }|V|>1\text{ (Case 3)}
\end{cases}
\]
\end{proposition}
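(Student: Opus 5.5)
The plan is to prove all three identities semantically. Two power term polynomials are equal in the language exactly when their evaluations under $\meval{\cdot}$ agree as Boolean polynomials, because monomials form a basis and the equational axioms of $\uplus$ mirror addition over $\mathbb{F}_2$. So for each case we evaluate both sides and compare them as sets of monomials with multiplicity taken mod $2$. The basic tool is the factorisation $\meval{S.\mathbf{P}_U}=m_S\cdot\big(\prod_{u\in U}(1+u)+1\big)$ for $U\neq\emptyset$. This holds because $\sum_{T\in\mathbf{P}_U}\prod_{x\in T}x=\prod_{u\in U}(1+u)+1$, which is the expansion already used in Case 1 of \autoref{prop:canonical-disjunction}. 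Writing $A:=\prod_{t\in T}(1+t)$ and $B:=\prod_{v\in V}(1+v)$, the left-hand side becomes $m_S(AB+1)$, since $T\cap V=\emptyset$.

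\textbf{Cases 1 and 2} ($|V|=1$, $V=\{v\}$, $B=1+v$). The key identity is
\[
AB+1=(A+1)+v(A+1)+v ,
\]
which one checks by expanding. Multiplying by $m_S$ gives the three summands in turn: $m_S(A+1)=\meval{S.\mathbf{P}_T}$, then $m_Sv(A+1)=\meval{(S\cup V).\mathbf{P}_T}$, then $m_Sv=\meval{(S\cup V).\mathbf{P}_\emptyset}$. This is exactly Case 2.

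In Case 1 we have $|T|=1$, so the terms $S.\mathbf{P}_T$ and $(S\cup V).\mathbf{P}_T$ would violate Condition~2 of \autoref{power_term}. We therefore rewrite them as the degree-$0$ terms $(S\cup T).\mathbf{P}_\emptyset$ and $(S\cup T\cup V).\mathbf{P}_\emptyset$, using $A+1=t$. We also need each right-hand term to be well formed: disjointness follows from the pairwise disjointness hypothesis, and $(S,U)\neq(\emptyset,\emptyset)$ holds because $T$ and $V$ are non-empty.

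\textbf{Case 3} ($|T|,|V|>1$). Expand $B=1+\sum_{\widetilde V\in\mathbf{P}_V}m_{\widetilde V}$, so that
\[
AB+1=(A+1)+(B+1)+(A+1)(B+1).
\]
The first two summands times $m_S$ are $\meval{S.\mathbf{P}_T}$ and $\meval{S.\mathbf{P}_V}$. For the third,
\[
m_S(A+1)(B+1)=\sum_{\widetilde V\in\mathbf{P}_V}m_{S\cup\widetilde V}(A+1)=\sum_{\widetilde V\in\mathbf{P}_V}\meval{(S\cup\widetilde V).\mathbf{P}_T},
\]
which is the $\biguplus$ term. Each term $(S\cup\widetilde V).\mathbf{P}_T$ is well formed because $\widetilde V\cap T=\emptyset$ and $|T|>1$.

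The main obstacle is not the algebra but justifying that semantic equality licenses the syntactic identity. I would argue this through the bijection between Boolean polynomials and their monomial sets together with the semantics rule $\meval{\pi\uplus\rho}=\meval{\pi}+\meval{\rho}$. In the same step I would take care with the degenerate case $S=\emptyset$, where $m_S=1$, and confirm that no right-hand term collapses to the excluded pair $(\emptyset,\emptyset)$.
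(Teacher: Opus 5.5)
Your proof is correct, but it takes a different route from the paper. The paper argues combinatorially on set families. It unfolds $S.\mathbf{P}_{T\cup V}=\{S\cup W\mid \emptyset\neq W\subseteq T\cup V\}$ and sorts each non-empty $W$ by whether it meets only $T$, only $V$, or both. It then notes that the resulting sub-families are pairwise disjoint, so set union coincides with $\uplus$.

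You instead use the factorisation $\meval{S.\mathbf{P}_U}=m_S F_U$ with $F_U=\prod_{u\in U}(1+u)+1$, and reduce each case to a one-line identity in the Boolean polynomial ring. The paper only introduces this device later, in the proof of Theorem~\ref{maintheorem}. Your two identities are in fact the same. With $F_T=A+1$, $F_V=B+1$ and $F_{T\cup V}=AB+1$, both read $F_{T\cup V}=F_T+F_V+F_TF_V$, and your Case~2 identity is the specialisation $F_V=v$. This is the paper's identity $F_UF_V=F_{U\cup V}+F_U+F_V$ applied to $T$ and $V$.

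Your route shows that all three cases are a single algebraic fact plus well-formedness fix-ups for singleton power parts. It also needs no disjointness bookkeeping, because arithmetic mod $2$ absorbs any overlap. The paper's route proves slightly more: the right-hand side is a genuine partition of the left-hand monomial family, with no cancellation. That is what makes these rules a pure regrouping.

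The step you call the main obstacle is not really one, and your stated justification for it is off. The equational axioms of $\uplus$ treat distinct power terms as independent generators. So they can never identify, say, $S_\emptyset.\mathbf{P}_{1,2}$ with $S_1.\mathbf{P}_\emptyset\uplus S_2.\mathbf{P}_\emptyset\uplus S_{1,2}.\mathbf{P}_\emptyset$, and "the axioms mirror $\mathbb{F}_2$-addition" does not make syntactic and semantic equality coincide. The proposition must be read as equality of denotations. That is exactly what both your computation and the paper's set-family argument establish, so nothing further needs to be justified.
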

The identities in \autoref{shorten_expand} provide local rewriting rules that allow power term polynomials to be shortened or expanded by regrouping families of
monomials, thereby enabling more compact representations, while preserving semantics.
The proof of \autoref{shorten_expand} is given in \autoref{app_shorten}.

\begin{example}\label{ex:shorten-expand}
We illustrate the three rewriting rules of \autoref{shorten_expand} over the same variable domain $\mathcal{D}=\{x_1,x_2,x_3,x_4\}$.
\medskip

\noindent\textbf{Case 1.} Let $S=\emptyset$, $T=\{x_1\}$, and $V=\{x_2\}$. Then, $T\cup V=\{x_1,x_2\}$, and
$S.\mathbf{P}_{T\cup V}=S_\emptyset.\mathbf{P}_{1,2}=
S_1.\mathbf{P}_\emptyset
\uplus
S_{1,2}.\mathbf{P}_\emptyset
\uplus
S_2.\mathbf{P}_\emptyset$.

\medskip
\noindent\textbf{Case 2.} Let $S=\emptyset$, $T=\{x_1,x_2\}$, and $V=\{x_3\}$. Then, $T\cup V=\{x_1,x_2,x_3\}$, and $S.\mathbf{P}_{T\cup V}=S_\emptyset.\mathbf{P}_{1,2,3}=
S_\emptyset.\mathbf{P}_{1,2}
\uplus
S_3.\mathbf{P}_{1,2}
\uplus
S_3.\mathbf{P}_\emptyset$.
Now observe that 
\[
\meval{S_\emptyset.\mathbf{P}_{1,2,3}}=x_1+x_2+x_3+x_1x_2+x_1x_3+x_2x_3+x_1x_2x_3
\]
and
\[
\meval{S_\emptyset.\mathbf{P}_{1,2}
\uplus
S_3.\mathbf{P}_{1,2}
\uplus
S_3.\mathbf{P}_\emptyset}=(x_1+x_2+x_1x_2)
+
(x_1x_3+x_2x_3+x_1x_2x_3)
+
x_3
\]
are semantically equivalent.

\medskip
\noindent\textbf{Case 3.} Let $S=\emptyset$, $T=\{x_1,x_2\}$, and $V=\{x_3,x_4\}$. Then $T\cup V=\{x_1,x_2,x_3,x_4\}$, and
\[
S.\mathbf{P}_{T\cup V}=S_\emptyset.\mathbf{P}_{1,2,3,4}.
\]
Since $\mathbf{P}_V=\{\{x_3\},\{x_4\},\{x_3,x_4\}\}$, \autoref{shorten_expand} gives
$
S_\emptyset.\mathbf{P}_{1,2,3,4}
=
S_3.\mathbf{P}_{1,2}
\uplus
S_4.\mathbf{P}_{1,2}
\uplus
S_{3,4}.\mathbf{P}_{1,2}
\uplus
S_\emptyset.\mathbf{P}_{1,2}
\uplus
S_\emptyset.\mathbf{P}_{3,4}
$.
Thus one larger family of monomials is decomposed into smaller structured components.

These rewriting rules are useful because they allow one to move between coarser and finer decompositions of the same Boolean polynomial. This is essential when searching for compact encodings: depending on the surrounding expression, expanding a power term may expose cancellation opportunities, while shortening several terms into a single one may reduce the overall size of the representation.
\end{example}

\subsection{Power Term Polynomial Algebra}
So far, power term polynomials have been equipped with the operation
$\uplus$, which captures addition over $\mathbb{F}_2$. To obtain a full
algebraic structure, we now introduce a multiplication operator on power
term polynomials. Intuitively, this operator corresponds to the product
of the Boolean polynomials denoted by the operands. As with $\uplus$, we
define multiplication syntactically through equational axioms and
interpret it semantically via evaluation.

\begin{definition}[Multiplication of Expressions]\label{def:pt-multiplication}
Let $\mathcal{E}$ denote the set of expressions generated by the syntactic
category \texttt{expr} in \autoref{fig:language}. Thus every power term
polynomial is an expression, but expressions may additionally contain
products formed with $\odot$.

The binary operator $\odot$ is a syntactic constructor on expressions
satisfying the following equational axioms for all $e,f,g\in\mathcal{E}$:
\begin{align}
    e \odot f &= f \odot e && \text{(Commutativity)} \nonumber\\
    (e \odot f)\odot g &= e \odot (f \odot g) && \text{(Associativity)} \nonumber\\
    e \odot S_I.\mathbf{P}_\emptyset &= e && \text{(Multiplicative Identity)} \nonumber\\
    e \odot S_\emptyset.\mathbf{P}_\emptyset &= S_\emptyset.\mathbf{P}_\emptyset && \text{(Multiplicative Zero)} \nonumber\\
    e \odot (f \uplus g) &= (e \odot f)\uplus(e \odot g) && \text{(Left Distributivity)} \nonumber\\
    (e \uplus f)\odot g &= (e \odot g)\uplus(f \odot g) && \text{(Right Distributivity)} \nonumber
\end{align}
\end{definition}

The semantics of multiplication is defined by
\[
\meval{e \odot f} = \meval{e}\cdot\meval{f},
\]
where the product on the right-hand side is the usual multiplication of Boolean polynomials over $\mathbb{F}_2$. Thus $\odot$ equips the abstraction language with a multiplicative structure that mirrors the product of Boolean polynomials. In particular, if $\pi$ and $\rho$ are power term polynomials, then $\pi\odot\rho$ is a well-formed expression of the syntactic category \texttt{expr}.

\begin{definition}[Boolean Field of the Abstraction Language]
Let $\mathbb{B}_{\mathrm{PT}} :=
\{S_\emptyset.\mathbf{P}_\emptyset,\; S_I.\mathbf{P}_\emptyset\}$ be the \emph{Boolean field of the
abstraction language}. The operations on $\mathbb{B}_{\mathrm{PT}}$ are
the restrictions of $\uplus$ and $\odot$ to these two constants.
\end{definition}

$\mathbb{B}_{\mathrm{PT}}$ is simply the syntactic realization
inside the abstraction language of the two-element field
$\mathbb{F}_2$.

\begin{proposition}
$\big(\mathbb{B}_{\mathrm{PT}},\uplus,\odot\big)$ is a field.
\end{proposition}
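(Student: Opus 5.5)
The plan is to verify the field axioms directly on the two constants $0_{\mathrm{PT}}:=S_\emptyset.\mathbf{P}_\emptyset$ and $1_{\mathrm{PT}}:=S_I.\mathbf{P}_\emptyset$. Almost all of them come straight from the equational axioms of Definition~\ref{power_term_polynomial} and Definition~\ref{def:pt-multiplication}, specialised to these constants.

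The first step is closure: every sum and product of two constants must again be one of the two constants.

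For $\uplus$:
\begin{itemize}
\item Additive Identity gives $0_{\mathrm{PT}}\uplus 0_{\mathrm{PT}}=0_{\mathrm{PT}}$.
\item Additive Identity together with Commutativity gives $0_{\mathrm{PT}}\uplus 1_{\mathrm{PT}}=1_{\mathrm{PT}}\uplus 0_{\mathrm{PT}}=1_{\mathrm{PT}}$.
\item Additive Inverse gives $1_{\mathrm{PT}}\uplus 1_{\mathrm{PT}}=0_{\mathrm{PT}}$.
\end{itemize}

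For $\odot$:
\begin{itemize}
\item Multiplicative Zero gives $e\odot 0_{\mathrm{PT}}=0_{\mathrm{PT}}$ for $e\in\mathbb{B}_{\mathrm{PT}}$.
\item Commutativity then gives $0_{\mathrm{PT}}\odot e=0_{\mathrm{PT}}$.
\item Multiplicative Identity gives $1_{\mathrm{PT}}\odot 1_{\mathrm{PT}}=1_{\mathrm{PT}}$.
\end{itemize}

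These four-entry tables are exactly those of $\mathbb{F}_2$. So the restrictions are well-defined binary operations on $\mathbb{B}_{\mathrm{PT}}$.

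The second step collects the remaining ring axioms, all inherited from the axioms stated for arbitrary expressions:
\begin{itemize}
\item commutativity and associativity of $\uplus$ and of $\odot$;
\item additive identity $0_{\mathrm{PT}}$ and multiplicative identity $1_{\mathrm{PT}}$;
\item additive inverses, since each element is its own inverse by the Additive Inverse axiom;
\item distributivity, from Left and Right Distributivity.
\end{itemize}
Because the axioms hold for all expressions, they hold in particular for the constants, and by closure the results stay inside $\mathbb{B}_{\mathrm{PT}}$.

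The third step handles the field-specific conditions. The only nonzero element is $1_{\mathrm{PT}}$, which is its own multiplicative inverse since $1_{\mathrm{PT}}\odot 1_{\mathrm{PT}}=1_{\mathrm{PT}}$.

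The main obstacle is the nontriviality condition $0_{\mathrm{PT}}\neq 1_{\mathrm{PT}}$, because it does not follow from the equational axioms. I would argue it semantically. Every equational axiom is sound for $\meval{\cdot}$, since each corresponds to a valid law of Boolean polynomials over $\mathbb{F}_2$. Hence provably equal expressions have equal denotations. Since $\meval{0_{\mathrm{PT}}}=0\neq 1=\meval{1_{\mathrm{PT}}}$, the two constants cannot be identified. This also shows that $\meval{\cdot}$ restricted to $\mathbb{B}_{\mathrm{PT}}$ is a bijective homomorphism onto $\mathbb{F}_2$. Alternatively, one can conclude directly from this isomorphism, transporting the field structure of $\mathbb{F}_2$.
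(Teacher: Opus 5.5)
Your proof is correct. It reaches the paper's conclusion by a more explicit route.

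\textbf{The paper's argument.} The paper argues in one step from the semantics. The constants $S_\emptyset.\mathbf{P}_\emptyset$ and $S_I.\mathbf{P}_\emptyset$ denote $0$ and $1$, and $\uplus$ and $\odot$ are interpreted as addition and multiplication over $\mathbb{F}_2$. Hence $\mathbb{B}_{\mathrm{PT}}$ is isomorphic to $\mathbb{F}_2$. This is exactly your closing ``alternatively'' remark.

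\textbf{Your argument.} Your main line is mostly syntactic:
\begin{itemize}
\item You derive the full addition and multiplication tables from the equational axioms (Additive Identity and Inverse, Multiplicative Identity and Zero, Commutativity).
\item You inherit the remaining ring laws from the axioms on arbitrary expressions.
\item You use the semantics only once, to rule out $S_\emptyset.\mathbf{P}_\emptyset = S_I.\mathbf{P}_\emptyset$. The argument is that the axioms are sound for $\meval{\cdot}$, and $\meval{\cdot}$ is defined compositionally, so provably equal expressions have equal denotations.
\end{itemize}

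\textbf{What your version adds.} It justifies two points the paper's proof leaves implicit.
\begin{itemize}
\item \emph{Closure.} Syntactically, $S_\emptyset.\mathbf{P}_\emptyset \uplus S_I.\mathbf{P}_\emptyset$ is a new expression, not an element of $\mathbb{B}_{\mathrm{PT}}$. The ``restrictions'' of $\uplus$ and $\odot$ are well-defined binary operations only modulo the equational theory, and you establish this.
\item \emph{Distinctness.} The equational theory does not collapse the two constants, and without this the field axiom $0\neq 1$ would not be secured.
\end{itemize}
The paper's version is shorter but tacitly identifies expressions with their denotations. It also cites Definition~\ref{power_term_polynomial} for the interpretation of both operations, whereas the semantics of $\odot$ is only given after Definition~\ref{def:pt-multiplication}.
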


\begin{proof}
The set $\mathbb{B}_{\mathrm{PT}}$ contains exactly the two elements
$S_\emptyset.\mathbf{P}_\emptyset$ and $S_I.\mathbf{P}_\emptyset$,
which denote $0$ and $1$, respectively. By Definition~\ref{power_term_polynomial},
the operators $\uplus$ and $\odot$ correspond to addition and
multiplication over $\mathbb{F}_2$. Hence
$\big(\mathbb{B}_{\mathrm{PT}},\uplus,\odot\big)$ is isomorphic to the
two-element field $\mathbb{F}_2$, and therefore it is a field.
\end{proof}

We now lift this structure from constants to arbitrary expressions of the abstraction language.

\begin{proposition}[Ring Structure of the Abstraction Language]
Let $\mathcal{E}$ be the set of all expressions generated by the syntactic category \texttt{expr} in \autoref{fig:language}. The operation $\uplus$, originally defined on power term polynomials, extends naturally to expressions in $\mathcal{E}$ through the grammar. Then, $(\mathcal{E},\uplus,\odot,S_\emptyset.\mathbf{P}_\emptyset,S_I.\mathbf{P}_\emptyset)$ is a commutative ring with unity.
\end{proposition}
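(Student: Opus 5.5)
The plan is to treat $\mathcal{E}$ as the set of expressions generated by the grammar in \autoref{fig:language}, taken modulo the equational axioms of Definition~\ref{power_term_polynomial} and Definition~\ref{def:pt-multiplication}. The ring axioms are then either postulated directly or follow from them. I will go through the commutative ring axioms one group at a time, citing the corresponding axiom. Where a law is not literally among the axioms, I will derive it equationally, and check it semantically via $\meval{\cdot}$ as a sanity check.

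\textbf{Additive group.} First I extend $\uplus$ from \texttt{ptpoly} to all of $\mathcal{E}$, setting $\meval{e\uplus f}=\meval{e}+\meval{f}$. I then require that commutativity, associativity, identity $S_\emptyset.\mathbf{P}_\emptyset$, and the inverse law $e\uplus e=S_\emptyset.\mathbf{P}_\emptyset$ hold for arbitrary expressions, exactly as in Definition~\ref{power_term_polynomial}. This makes $(\mathcal{E},\uplus)$ an abelian group in which every element is its own inverse, so $-e=e$.

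This extension is the step I expect to be the main obstacle, because the definition of $\uplus$ was only stated on power term polynomials. I will justify it in two ways. Syntactically, the distributivity axioms of Definition~\ref{def:pt-multiplication} already mix $\uplus$ and $\odot$ on expressions, so the extension is implicit there. Semantically, the extended axioms are sound: the equations hold after applying $\meval{\cdot}$, since addition in $\mathbb{F}_2[x_1,\ldots,x_n]/(x_i^2-x_i)$ satisfies them.

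\textbf{Multiplicative monoid.} Commutativity, associativity, and the identity $S_I.\mathbf{P}_\emptyset$ for $\odot$ are directly among the axioms of Definition~\ref{def:pt-multiplication}. The zero law is also postulated there, though in a ring it would in any case follow from distributivity and the additive inverse law.

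\textbf{Distributivity.} Left and right distributivity are axioms. Given commutativity, either one implies the other.

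\textbf{Well-definedness.} To ensure that the quotient by these equations is a well-defined structure, and not degenerate, I will note that $\meval{\cdot}$ respects every axiom. It is therefore a homomorphism onto the Boolean polynomial ring, and the axioms are consistent. In particular $S_\emptyset.\mathbf{P}_\emptyset\neq S_I.\mathbf{P}_\emptyset$, because they evaluate to $0\neq1$. This gives a commutative ring with unity.
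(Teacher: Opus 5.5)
Your proposal is correct and takes essentially the paper's route: each commutative-ring axiom is read off from the laws postulated in Definitions~\ref{power_term_polynomial} and~\ref{def:pt-multiplication}, with the additive laws assumed to extend from power term polynomials to all of $\mathcal{E}$. Your explicit reading of $\mathcal{E}$ as a quotient by the axioms, and your check via $\meval{\cdot}$, are more careful than the paper's ``extend naturally through the grammar'' but do not change the argument. (The semantic map is needed only for non-degeneracy, i.e.\ $S_\emptyset.\mathbf{P}_\emptyset\neq S_I.\mathbf{P}_\emptyset$; the quotient by the congruence generated by the axioms is well defined on its own.)
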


\begin{proof}
By Definition~\ref{power_term_polynomial}, the operation $\uplus$ is commutative and associative on power term polynomials, admits $S_\emptyset.\mathbf{P}_\emptyset$ as neutral element, and every element is its own inverse. These laws extend to all expressions in $\mathcal{E}$ through the grammar of \autoref{fig:language}.

By Definition~\ref{def:pt-multiplication}, multiplication $\odot$ is associative and commutative, with identity
$S_I.\mathbf{P}_\emptyset$ and absorbing element $S_\emptyset.\mathbf{P}_\emptyset$. The distributive laws ensure that $\odot$ distributes over $\uplus$ on both sides. Therefore, $\mathcal{E}$ is a commutative ring with unity.
\end{proof}

The preceding proposition shows that the abstraction language admits the same basic algebraic structure as ordinary Boolean polynomials. Moreover, because the constants of the language form the field $\mathbb{B}_{\mathrm{PT}}$, this ring carries a natural scalar multiplication.

\begin{definition}[Scalar Multiplication]
Let $\lambda\in\mathbb{B}_{\mathrm{PT}}$ and $e\in\mathcal{E}$. The scalar multiplication of $e$ by $\lambda$ is defined by $\lambda e := \lambda\odot e.$
\end{definition}

\begin{proposition}[Algebra Structure]
$\mathcal{E}$ is an algebra over the Boolean field of the abstraction language $\mathbb{B}_{\mathrm{PT}}$.
\end{proposition}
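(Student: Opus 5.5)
The plan is to verify the axioms of an associative algebra over a field directly. We use the ring structure already established in the Ring Structure proposition, together with the fact that $(\mathbb{B}_{\mathrm{PT}},\uplus,\odot)$ is a field. Recall that an algebra over a field $K$ is a ring $A$ that is also a $K$-vector space, with scalar multiplication compatible with the ring product:
\[
\lambda(e\odot f)=(\lambda e)\odot f=e\odot(\lambda f).
\]
Since $\mathbb{B}_{\mathrm{PT}}\subseteq\mathcal{E}$, both operands of $\lambda\odot e$ are expressions. So scalar multiplication is simply the restriction of $\odot$ to $\mathbb{B}_{\mathrm{PT}}\times\mathcal{E}$, and every required identity should reduce to an axiom of Definition~\ref{def:pt-multiplication}.

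The first step is to show that $(\mathcal{E},\uplus)$ is a $\mathbb{B}_{\mathrm{PT}}$-vector space. The additive group axioms are inherited from the Ring Structure proposition. For the module axioms, the two cases are as follows.
\begin{itemize}
\item $(\lambda\uplus\mu)e=\lambda e\uplus\mu e$ follows from Right Distributivity.
\item $\lambda(e\uplus f)=\lambda e\uplus\lambda f$ follows from Left Distributivity.
\end{itemize}
Next, $(\lambda\odot\mu)e=\lambda(\mu e)$ is Associativity of $\odot$. Finally, $S_I.\mathbf{P}_\emptyset\, e=e$ is the Multiplicative Identity axiom together with Commutativity. We must also check that the field operations on $\mathbb{B}_{\mathrm{PT}}$ are the restrictions of $\uplus,\odot$, so that $\lambda\uplus\mu$ and $\lambda\odot\mu$ computed in the field coincide with the same expressions computed in $\mathcal{E}$. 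This holds by the definition of $\mathbb{B}_{\mathrm{PT}}$.

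The second step is bilinearity, or compatibility, of $\odot$ with scalars. We have $\lambda(e\odot f)=(\lambda\odot e)\odot f$ by Associativity. We also have $\lambda(e\odot f)=e\odot(\lambda\odot f)$ by Associativity and Commutativity: we commute $\lambda$ past $e$ and then reassociate. Additivity of $\odot$ in each argument is exactly the two distributive laws. An alternative that I would note is a case split on $\lambda$. For $\lambda=S_I.\mathbf{P}_\emptyset$ every identity is trivial by Multiplicative Identity. For $\lambda=S_\emptyset.\mathbf{P}_\emptyset$ both sides collapse to $S_\emptyset.\mathbf{P}_\emptyset$ by Multiplicative Zero, using Additive Identity where a sum of zeros appears. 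This two-case check gives an independent sanity check of each axiom.

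I expect no real computational obstacle. The delicate point is rather conceptual: the equalities in $\mathcal{E}$ are the equational congruence generated by the stated axioms, not syntactic identity. I would state this explicitly, so that all identities above are read modulo that congruence, equivalently on the quotient term algebra. As a consistency check, I would also observe that $\meval{\cdot}$ respects every axiom used. It is therefore a homomorphism into $\mathbb{F}_2[x_1,\ldots,x_n]/\{x_i^2=x_i\}$, which confirms that the algebra structure is non-degenerate and mirrors the Boolean polynomial algebra.
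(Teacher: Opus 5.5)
Your proposal is correct and takes essentially the same route as the paper. Both arguments inherit the commutative ring structure of $\mathcal{E}$ and the field structure of $\mathbb{B}_{\mathrm{PT}}$, obtain the module axioms from the two distributive laws, associativity and the multiplicative identity, and then check that $\odot$ is compatible with the scalar action. Your version is slightly more careful than the paper's in three places. The paper credits compatibility to the distributive laws, whereas you correctly derive $\lambda(e\odot f)=(\lambda e)\odot f=e\odot(\lambda f)$ from associativity and commutativity. You note that commutativity is needed to apply the identity axiom on the left. You also state explicitly that all equalities hold modulo the congruence generated by the axioms.
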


\begin{proof}
By the previous proposition, $(\mathcal{E},\uplus,\odot)$ is a commutative ring with unity, and by the preceding result $\mathbb{B}_{\mathrm{PT}}$ is a field. It therefore remains to verify that the scalar multiplication defined above satisfies the compatibility axioms with the ring addition $\uplus$.

The distributive and associative laws of $\odot$ imply that for all $\lambda,\mu\in\mathbb{B}_{\mathrm{PT}}$ and $e,f\in\mathcal{E}$,
\[
(\lambda\uplus\mu)e=\lambda e\uplus\mu e, \qquad
\lambda(e\uplus f)=\lambda e\uplus\lambda f, \qquad
(\lambda\odot\mu)e=\lambda(\mu e),
\]
and the multiplicative identity satisfies $S_I.\mathbf{P}_\emptyset\, e = e$.
Thus scalar multiplication by elements of $\mathbb{B}_{\mathrm{PT}}$ is well defined and satisfies the usual axioms for scalars acting on $\mathcal{E}$.

The distributive laws established in Definition~\ref{def:pt-multiplication} ensure that the multiplication $\odot$ is compatible with the scalar action. Hence, $\mathcal{E}$ is an algebra over $\mathbb{B}_{\mathrm{PT}}$.
\end{proof}

\paragraph{Power Term Polynomial Calculus}
\begin{figure}[t]
\centering
\resizebox{\textwidth}{!}{\begin{tabular}{llll}
\textbf{N°} & \textbf{Conditions on \(S,U,V,T\)} & \textbf{\(S.\mathbf{P}_U\,\odot\,T.\mathbf{P}_V\)} & \(\mathrm{Size}\)\\
\hline
1 & \((S=U=\emptyset)\,\lor\,(T=V=\emptyset)\) & \(S_\emptyset.\mathbf{P}_\emptyset\) & 1\\
2 & \(S=S_I\,\land\,U=\emptyset\) & \(T.\mathbf{P}_V\) & 1\\
3 & \(T=S_I\,\land\,V=\emptyset\) & \(S.\mathbf{P}_U\) & 1\\
\hline
4 & \(U=V=\emptyset\) & \((S\cup T).\mathbf{P}_\emptyset\) & 1\\
5 & \(U=\emptyset\,\land\,S\cap V=\emptyset\) & \((S\cup T).\mathbf{P}_V\) & 1\\
6 & \(V=\emptyset\,\land\,T\cap U=\emptyset\) & \((S\cup T).\mathbf{P}_U\) & 1\\
7 & \(U=\emptyset\,\land\,S\cap V\neq\emptyset\) & \((S\cup T).\mathbf{P}_\emptyset\) & 1\\
8 & \(V=\emptyset\,\land\,T\cap U\neq\emptyset\) & \((S\cup T).\mathbf{P}_\emptyset\) & 1\\
\hline
9 & \(U\cap V=\emptyset\,\land\,S\cap V=\emptyset\,\land T\cap U=\emptyset\) & \((S\cup T).\mathbf{P}_{U\cup V}\uplus(S\cup T).\mathbf{P}_U\uplus(S\cup T).\mathbf{P}_V\) & 3\\
10 & \(U\cap V=\emptyset\,\land\,S\cap V\neq\emptyset\,\land T\cap U=\emptyset\) & \((S\cup T).\mathbf{P}_U\) & 1\\
11 & \(U\cap V=\emptyset\,\land\,S\cap V=\emptyset\,\land T\cap U\neq\emptyset\) & \((S\cup T).\mathbf{P}_V\) & 1\\
12 & \(U\cap V=\emptyset\,\land\,S\cap V\neq\emptyset\,\land T\cap U\neq\emptyset\) & \((S\cup T).\mathbf{P}_\emptyset\) & 1\\
13 & \(U\cap V\neq\emptyset\,\land\,S\cap V\neq\emptyset\,\land T\cap U=\emptyset\) & \((S\cup T).\mathbf{P}_U\) & 1\\
14 & \(U\cap V\neq\emptyset\,\land\,S\cap V=\emptyset\,\land T\cap U\neq\emptyset\) & \((S\cup T).\mathbf{P}_V\) & 1\\
15 & \(U\cap V\neq\emptyset\,\land\,S\cap V\neq\emptyset\,\land T\cap U\neq\emptyset\) & \((S\cup T).\mathbf{P}_\emptyset\) & 1\\
\hline
16 & \(|U\cap V|=1\,\land\,|U\setminus V|>1\,\land\,|V\setminus U|>1\,\land\,|S\cap V|=0\,\land\,|T\cap U|=0\) & \((S\cup T).\mathbf{P}_{U\cup V}\uplus(S\cup T).\mathbf{P}_U\uplus(S\cup T).\mathbf{P}_V\) & 3\\
17 & \(|U\cap V|=1\,\land\,|U\setminus V|>1\,\land\,|V\setminus U|=1\,\land\,|S\cap V|=0\,\land\,|T\cap U|=0\) & \((S\cup T\cup(V\setminus U)).\mathbf{P}_{U\setminus V}\uplus(S\cup T\cup V).\mathbf{P}_{U\setminus V}\uplus(S\cup T\cup (U\cap V)).\mathbf{P}_\emptyset\) & 3\\
18 & \(|U\cap V|=1\,\land\,|U\setminus V|=1\,\land\,|V\setminus U|>1\,\land\,|S\cap V|=0\,\land\,|T\cap U|=0\) & \((S\cup T\cup(U\setminus V)).\mathbf{P}_{V\setminus U}\uplus(S\cup T\cup U).\mathbf{P}_{V\setminus U}\uplus(S\cup T\cup (U\cap V)).\mathbf{P}_\emptyset\) & 3\\
19 & \(|U\cap V|=1\,\land\,|U\setminus V|=1\,\land\,|V\setminus U|=1\,\land\,|S\cap V|=0\,\land\,|T\cap U|=0\) & \((S\cup T\cup(U\setminus V)\cup(V\setminus U)).\mathbf{P}_\emptyset\uplus(S\cup T\cup U\cup V).\mathbf{P}_\emptyset\uplus(S\cup T\cup (U\cap V)).\mathbf{P}_\emptyset\) & 3\\
\hline
20 & \(|U\cap V|>1\,\land\,(|U\setminus V|=0\,\lor\,|V\setminus U|=0)\) & \((S\cup T).\mathbf{P}_{U\cap V}\) & 1\\
21 & \(|U\cap V|>1\,\land\,|U\setminus V|>1\,\land\,|V\setminus U|>1\,\land\,|S\cap V|=0\,\land\,|T\cap U|=0\) & \((S\cup T).\mathbf{P}_{U\cup V}\uplus(S\cup T).\mathbf{P}_U\uplus(S\cup T).\mathbf{P}_V\) & 3\\
22 & \(|U\cap V|>1\,\land\,|U\setminus V|>1\,\land\,|V\setminus U|=1\,\land\,|S\cap V|=0\,\land\,|T\cap U|=0\) & \((S\cup T\cup(V\setminus U)).\mathbf{P}_{U}\uplus(S\cup T\cup (V\setminus U)).\mathbf{P}_{U\cap V}\uplus(S\cup T).\mathbf{P}_{U\cap V}\) & 3\\
23 & \(|U\cap V|>1\,\land\,|U\setminus V|=1\,\land\,|V\setminus U|>1\,\land\,|S\cap V|=0\,\land\,|T\cap U|=0\) & \((S\cup T\cup(U\setminus V)).\mathbf{P}_{V}\uplus(S\cup T\cup (U\setminus V)).\mathbf{P}_{U\cap V}\uplus(S\cup T).\mathbf{P}_{U\cap V}\) & 3\\
24 & \(|U\cap V|>1\,\land\,|U\setminus V|=1\,\land\,|V\setminus U|=1\,\land\,|S\cap V|=0\,\land\,|T\cap U|=0\) & \((S\cup T\cup(U\setminus V)\cup(V\setminus U)).\mathbf{P}_{U\cap V}\uplus(S\cup T).\mathbf{P}_{U\cap V}\uplus(S\cup T\cup(U\setminus V)\cup(V\setminus U)).\mathbf{P}_\emptyset\) & 3\\
\end{tabular}}
\caption{Multiplication rules for any two power terms. Conditions are checked according to the ordering given by the corresponding number. The rules always produce power term polynomials of size no greater than 3. Better visualization by zooming in the figure. A larger version is provided in the Appendix.}
\label{ex:multip}
\end{figure}
We now show that expressions in the abstraction language can be manipulated \emph{syntactically}, without first expanding them into ordinary Boolean polynomials. In particular, the following result provides a fundamental rewriting principle for the calculus: the product of any two power terms can always be transformed into an equivalent power term polynomial using only exclusive disjunction. Thus, multiplication can be eliminated at the syntactic level while preserving semantic equivalence, and the resulting representation remains small.
\begin{theorem}[Multiplication Rewriting Principle]\label{maintheorem}
Let $e,f$ be atomic expressions of the abstraction language,
that is, power terms $S.\mathbf{P}_U,T.\mathbf{P}_V\in\mathcal{T}$
or the constants $S_\emptyset.\mathbf{P}_\emptyset$ and
$S_I.\mathbf{P}_\emptyset$. Then, the product $e\odot f$ is semantically equivalent to a power term polynomial containing at most three atomic terms. Moreover, this polynomial can be obtained by a case analysis on the sets $S,U,T,V$, as summarized in \autoref{ex:multip}.
\end{theorem}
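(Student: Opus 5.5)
The plan is to argue entirely at the semantic level, using one closed form for the denotation of a power term. Write $m_S:=\prod_{x\in S}x$ and, for $U\neq\emptyset$, $q_U:=1+\prod_{x\in U}(1+x)$. This is the ANF of the clause $\bigvee_{x\in U}x$, exactly as computed in Case~1 of the proof of \autoref{prop:canonical-disjunction}. Expanding $\prod_{x\in U}(1+x)$ gives the sum of $m_W$ over all $W\subseteq U$, so
\[
\meval{S.\mathbf{P}_U}=m_S\,q_U \quad (U\neq\emptyset), \qquad \meval{S.\mathbf{P}_\emptyset}=m_S .
\]
Hence, for two power terms,
\[
\meval{S.\mathbf{P}_U\odot T.\mathbf{P}_V}=m_S\,m_T\,q_U\,q_V ,
\]
where $q_U$ (resp.\ $q_V$) is dropped when $U$ (resp.\ $V$) is empty. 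Moreover $m_Sm_T=m_{S\cup T}$ by idempotence. Every row of \autoref{ex:multip} should then follow by simplifying this product with three Boolean identities and reading the result back as a power term polynomial.

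The three identities are as follows.
\begin{itemize}
\item \emph{Absorption}: if $x\in U$, then $x\,q_U=x$, because $x(1+x)=0$ kills the product term. Consequently $m_S\,q_V=m_S$ whenever $S\cap V\neq\emptyset$.
\item \emph{Inclusion}: if $U\subseteq V$, then $q_U\,q_V=q_U$, since $\bigvee U$ implies $\bigvee V$.
\item \emph{Disjunction}: $q_U\,q_V=q_U+q_V+q_{U\cup V}$, using $a\lor b=a+b+ab$ and $\bigvee U\lor\bigvee V=\bigvee(U\cup V)$.
\end{itemize}

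With these in hand, I would verify the rows group by group, in the stated order of precedence.
\begin{itemize}
\item Rows 1--3 are the constants: the multiplicative zero and identity axioms of Definition~\ref{def:pt-multiplication}.
\item Rows 4--8 have a degree-0 factor. The result is $m_{S\cup T}q_V$ when $S\cap V=\emptyset$. Otherwise absorption reduces it to $m_{S\cup T}$.
\item Rows 10--15 use absorption on whichever of $q_U,q_V$ meets $T$ or $S$. The surviving factor is exactly the one shown.
\item Row 20 is the inclusion identity: if, say, $U\subseteq V$, then $U\cap V=U$.
\item Rows 9, 16 and 21 apply the disjunction identity and multiply through by $m_{S\cup T}$. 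This yields $(S\cup T).\mathbf{P}_{U\cup V}\uplus(S\cup T).\mathbf{P}_U\uplus(S\cup T).\mathbf{P}_V$.
\end{itemize}
Size at most three is then immediate from the right-hand sides.

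Beyond semantic equality, each output must be well formed in the sense of \autoref{power_term}, and I expect this to be the delicate step.
\begin{itemize}
\item Disjointness of base and exponent: in the non-absorbed rows it follows from $S\cap U=T\cap V=\emptyset$ together with the row conditions $S\cap V=T\cap U=\emptyset$.
\item The condition $|U|\neq 1$: outputs with exponent $U$, $V$ or $U\cup V$ inherit it from the inputs.
\end{itemize}
The remaining difficulty is rows 17--19 and 22--24, where $|U\setminus V|=1$ or $|V\setminus U|=1$. There the table deliberately uses a different, finer decomposition. My approach is to start again from $m_{S\cup T}q_Uq_V$ and split off the singleton, writing $V=(U\cap V)\cup\{v\}$. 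Then
\[
q_V=q_{U\cap V}+v+v\,q_{U\cap V},
\]
which is Case~1 or Case~2 of \autoref{shorten_expand} applied with the singleton part $\{v\}$. Next I distribute $q_U$ and apply absorption and inclusion to each term: $q_Uq_{U\cap V}=q_{U\cap V}$ since $U\cap V\subseteq U$. Finally I collect terms, possibly using \autoref{shorten_expand} once more to regroup, to reach the displayed three-term expressions.

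I expect this bookkeeping, checking that the collected terms coincide with the table entries and that $|U\cap V|=1$ forces the degree-0 form in rows 17--19, to be the main obstacle. I would settle each of these rows by comparing monomial sets directly if the algebra becomes tangled.

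Since the rows are checked in order, the final step is exhaustiveness. I would confirm that any pair of atomic terms satisfies some row. In particular, the case $|U\cap V|=1$ with $U\subseteq V$ or $V\subseteq U$ would be covered by extending the inclusion argument of row 20, or noted as the implicit default of the first-applicable convention.
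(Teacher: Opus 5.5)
Your proposal is correct and follows essentially the paper's own route: the closed form $\meval{S.\mathbf{P}_U}=m_S\,q_U$ (the paper's $m_SF_U$), absorption via $x(1+x)=0$ for Rules 7--8 and 10--15, the identity $q_Uq_V=q_{U\cup V}+q_U+q_V$ for Rules 9, 16 and 21, inclusion for Rule 20, and Proposition~\ref{shorten_expand} to reach the finer table forms of Rules 17--19 and 22--24.

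Two small remarks. First, your order of operations for those last rows is cleaner than the paper's: splitting the singleton off first, then distributing and absorbing, gives for $V\setminus U=\{v\}$ the form $m_{S\cup T}\,(q_{U\cap V}+v\,q_U+v\,q_{U\cap V})$ immediately. The paper's derivation of Rule 17 instead applies Case 2 of Proposition~\ref{shorten_expand} with $V'=V$, which has two elements there, so it lies outside that case's hypothesis $|V'|=1$; the table entry itself is still correct, as your route confirms. Second, the exhaustiveness case $|U\cap V|=1$ with $U\subseteq V$ or $V\subseteq U$ is simply empty, since it would force $|U|=1$ or $|V|=1$, which well-formedness excludes, so no fallback is needed.
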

Proof is given in \autoref{app_theorem}. We now demonstrate through an example how the proposed algebra can be used to represent CNF formulae and show how the above rules manipulate terms within the abstraction language by reducing multiplication to sums. This simplifies the original representation and ultimately reveal the ANF representation.
\begin{example}
    Consider the following CNF
    \begin{align}
        \phi:=&(x_1\lor x_2)\land(x_2\lor x_3)\land(\neg x_1\lor\neg x_4\lor x_5)\land(\neg x_1\lor x_4\lor x_6)\land(\neg x_1\lor \neg x_5\lor x_6)\nonumber\\
        &\land(\neg x_1\lor x_4\lor\neg x_6)\land(\neg x_1\lor \neg x_5\lor \neg x_6)     \nonumber  
    \end{align}
    We want first to convert $\phi$ into an equivalent expression within our abstraction language. To do so, we first apply the rules in \autoref{prop:canonical-disjunction} to convert all clauses into their canonical power term polynomial and subsequently combine these polynomials through multiplication. This yields the following expression
    \begin{align}
        e:=&\underbrace{S_\emptyset.\mathbf{P}_{1,2}}_{p_1}\odot\underbrace{S_\emptyset.\mathbf{P}_{2,3}}_{p_2}\odot\underbrace{(S_I.\mathbf{P}_\emptyset\uplus S_{1,4}.\mathbf{P}_\emptyset\uplus S_{1,4,5}.\mathbf{P}_\emptyset)}_{p_3}\odot\underbrace{(S_I.\mathbf{P}_\emptyset\uplus S_{1}.\mathbf{P}_\emptyset\uplus S_{1}.\mathbf{P}_{4,6})}_{p_4} \odot\nonumber\\
        &\underbrace{(S_I.\mathbf{P}_\emptyset\uplus S_{1,5}.\mathbf{P}_\emptyset\uplus S_{1,5,6}.\mathbf{P}_\emptyset)}_{p_5}\odot\underbrace{(S_I.\mathbf{P}_\emptyset\uplus S_{1,6}.\mathbf{P}_\emptyset\uplus S_{1,4,6}.\mathbf{P}_\emptyset)}_{p_6}\odot\underbrace{(S_I.\mathbf{P}_\emptyset\uplus S_{1,5,6}.\mathbf{P}_\emptyset)}_{p_7} \nonumber
    \end{align}
    We can conduct the multiplication between polynomials pairwise. Given the distributive properties of multiplication, each multiplication is applied so that each individual power term in one polynomial is multiplied against each term in the other one. All these individual multiplication are rewritten using the rules in \autoref{maintheorem}.
    Suppose we multiply $p_1$ against $p_3$, $p_5$ against $p_7$ and $p_4$ against $p_6$. The obtained results are:
    \begin{align}
        p_1\odot p_3
        &= \big(S_\emptyset.\mathbf{P}_{1,2}\big)
          \odot
          \big(S_I.\mathbf{P}_\emptyset\uplus S_{1,4}.\mathbf{P}_\emptyset\uplus S_{1,4,5}.\mathbf{P}_\emptyset\big)\nonumber\\
        &= S_{1,4}.\mathbf{P}_\emptyset
          \uplus S_{1,4,5}.\mathbf{P}_\emptyset
          \uplus S_\emptyset.\mathbf{P}_{1,2} \nonumber \\
        p_5\odot p_7
        &=\big(S_I.\mathbf{P}_\emptyset\uplus S_{1,5}.\mathbf{P}_\emptyset\uplus S_{1,5,6}.\mathbf{P}_\emptyset\big)
          \odot
          \big(S_I.\mathbf{P}_\emptyset\uplus S_{1,5,6}.\mathbf{P}_\emptyset\big)\nonumber\\
        &=S_I.\mathbf{P}_\emptyset
          \uplus S_{1,5}.\mathbf{P}_\emptyset \nonumber\\
        p_4\odot p_6
        &=\big(S_I.\mathbf{P}_\emptyset\uplus S_{1}.\mathbf{P}_\emptyset\uplus S_{1}.\mathbf{P}_{4,6}\big)
          \odot
          \big(S_I.\mathbf{P}_\emptyset\uplus S_{1,6}.\mathbf{P}_\emptyset\uplus S_{1,4,6}.\mathbf{P}_\emptyset\big)\nonumber\\
        &=S_I.\mathbf{P}_\emptyset
          \uplus S_{1}.\mathbf{P}_\emptyset
          \uplus S_{1,4}.\mathbf{P}_\emptyset \nonumber
    \end{align}
    The original expression can be simplified into the equivalent form
    \begin{align}
       e=&\underbrace{S_\emptyset.\mathbf{P}_{2,3}}_{p_2}\odot\underbrace{(S_{1,4}.\mathbf{P}_\emptyset
          \uplus S_{1,4,5}.\mathbf{P}_\emptyset
          \uplus S_\emptyset.\mathbf{P}_{1,2})}_{p_1'}\odot\underbrace{(S_I.\mathbf{P}_\emptyset
          \uplus S_{1,5}.\mathbf{P}_\emptyset)}_{p_2'}\odot\underbrace{(S_I.\mathbf{P}_\emptyset
          \uplus S_{1}.\mathbf{P}_\emptyset
          \uplus S_{1,4}.\mathbf{P}_\emptyset)}_{p_3'}\nonumber
    \end{align}
    Suppose that we multiply $p_2$ against $p_2'$ and $p_1'$ against $p_3'$
     \begin{align}
        p_2\odot p_2'
        &=  S_\emptyset.\mathbf{P}_{2,3}\uplus S_{1,5}.\mathbf{P}_{2,3}\nonumber \\
        p_1'\odot p_3'
        &= S_1.\mathbf{P}_\emptyset\uplus S_{1,4,5}.\mathbf{P}_\emptyset\uplus S_\emptyset.\mathbf{P}_{1,2} \underbrace{=}_{\ast} S_2.\mathbf{P}_\emptyset\uplus S_{1,2}.\mathbf{P}_\emptyset\uplus S_{1,4,5}.\mathbf{P}_\emptyset\nonumber
    \end{align}
    where the last equality $\ast$ is obtained by leveraging the decomposition rule in Case 1 of \autoref{shorten_expand} on $S_\emptyset.\mathbf{P}_{1,2}$.
    This yields the following simplified expression
    \begin{align}
        e=&\underbrace{(S_\emptyset.\mathbf{P}_{2,3}\uplus S_{1,5}.\mathbf{P}_{2,3})}_{p_1''}\odot\underbrace{(S_2.\mathbf{P}_\emptyset\uplus S_{1,2}.\mathbf{P}_\emptyset\uplus S_{1,4,5}.\mathbf{P}_\emptyset)}_{p_2''} \nonumber
    \end{align}
    By multiplying $p_1''$ against $p_2''$, we obtain
    \[
    e=S_2.\mathbf{P}_\emptyset\uplus S_{1,2}.\mathbf{P}_\emptyset
    \]
    This can be evaluated to obtain the final ANF representation for $\phi$, that is $\text{ANF}(\phi)=x_2+x_1x_2$.
\end{example}

We provide additional information about the algebra in \autoref{app_info}, including (i) a discussion between power term polynomials and decision diagrams, (ii) exponential blowup and a conversion from power term polynomials to CNF and (iii) an interpretation of the introduced calculus.

\section{Conclusion and Future Work}
\label{sec:conclusions}
We introduced \emph{power term polynomial algebra} as an abstraction language for Boolean polynomials and CNF formulae, motivated by the \emph{tiling mismatch} that arises when moving between resolution-style and algebraic representations. The central contribution of the paper is to show that this mismatch can be addressed at the representation level, without immediately resorting to auxiliary variables and side constraints. Concretely, we defined power terms and power term polynomials, established their semantics, and showed that they form an algebraic structure mirroring Boolean polynomial arithmetic over $\mathbb{F}_2$. We further identified key structural properties of the representation, including compact canonical encodings for disjunctive clauses, local shortening/expansion rules, and a multiplication rewriting principle ensuring that the product of two atomic terms can always be reduced to a power term polynomial of bounded size. Taken together, these results show that the language is expressive enough to encode CNF clauses compactly, structured enough to admit direct symbolic manipulation, and sufficiently algebraic to serve as a bridge between CNF-oriented and ANF-oriented reasoning.

At the same time, the present work is primarily foundational. While the abstraction language avoids some of the immediate overheads of standard CNF$\leftrightarrow$ANF conversions, we have not yet established general complexity bounds for normalization, simplification, or proof search within the language, nor have we provided a complete characterization of when power term polynomial representations are provably more compact than existing CNF- or ANF-based encodings \cite{biere2021handbook}. Likewise, although the multiplication calculus guarantees bounded local rewritings, the global behavior of repeated rewritings remains to be understood: different rewrite orders may lead to substantially different intermediate representations, and minimal representations are in general non-canonical. From a SAT perspective, this means that the current paper should be viewed as introducing a new representation and calculus, rather than yet another competitive solver architecture. In particular, we do not claim empirical superiority over modern CDCL, inprocessing, XOR-aware, or algebraic methods at this stage.

These limitations point to several natural directions for future research. On the theoretical side, an important next step is to study normal forms, confluence properties of the rewrite system, proof complexity connections, and the algorithmic complexity of finding small or optimal power term polynomial representations. It would also be valuable to clarify how this language relates to existing algebraic proof systems and whether it yields new size or simulation results between resolution-style and algebraic derivations.

On the practical side, one promising direction is to use the language as a \emph{preprocessing layer} for SAT solving \cite{biere2021preprocessing}. Since clauses admit compact encodings in this formalism, one can ask whether power term polynomial simplifications can expose structure that is hard to see in raw CNF, and whether this can improve downstream SAT performance after recompilation. Closely related is the question of \emph{conversion}: the abstraction language may provide an intermediate representation for CNF$\rightarrow$ANF and ANF$\rightarrow$CNF transformations that reduces the need for aggressive tiling through auxiliary variables, or at least makes such tiling more principled and structure-aware \cite{choo2019bosphorus,horacek2020conver}.

A further direction is to investigate the language as an \emph{interface} between SAT solvers and algebraic proof systems \cite{choo2019bosphorus,bright2022sat}. Modern workflows often face a sharp boundary between clause-based reasoning and polynomial reasoning; power term polynomials suggest a common layer in which both styles of information can be represented and manipulated. This could enable hybrid procedures in which CNF-derived information and algebraic-derived information are exchanged in a controlled way, rather than through repeated full conversions.

Finally, the abstraction language opens the possibility of designing a new class of \emph{native solvers} that operate directly on power term polynomials. Such solvers would not merely translate back to CNF or ANF, but would branch, simplify, propagate, and rewrite directly in the abstract language. Beyond decision SAT, this perspective also seems relevant to \#SAT and model counting, where compact symbolic representations of structured sets of monomials may offer advantages for counting-based reasoning \cite{chakraborty2021approximate}. Whether these possibilities translate into concrete algorithmic gains remains an open question, but we believe the framework developed here provides a precise foundation on top of which such investigations can now be carried out.

\begin{acknowledgments}
  E.S.\ receives funding from the European Research Council (ERC) under the Horizon Europe research and innovation programme (MSCA-GF grant agreement No. 101149800). A. S.L. received funding from thet National Science Foundation under Grant No. 1918839.
\end{acknowledgments}

\section*{Declaration on Generative AI}
 During the preparation of this work, the author(s) used ChatGPT in order to: Grammar and spelling check. After using these tool(s)/service(s), the author(s) reviewed and edited the content as needed and take(s) full responsibility for the publication’s content. 

\bibliography{main}
\newpage
\appendix

\section{Proof of \autoref{shorten_expand}}\label{app_shorten}
\begin{proposition}[Property 3: Shortening / Expanding Power Term Polynomials (Restated)]
Let $S,T,V\subseteq\mathcal{D}$ with $T\cap V=\emptyset$ and define $U:=T\cup V$ such that $S\cap U=\emptyset$. Then, the following identities hold:
\[
S.\mathbf{P}_U=
\begin{cases}
    (S\cup T).\mathbf{P}_\emptyset\,\uplus\,(S\cup T\cup V).\mathbf{P}_\emptyset\,\uplus\,(S\cup V).\mathbf{P}_\emptyset & \text{if }|T|=1\text{ and }|V|=1\text{ (Case 1)}\\
    S.\mathbf{P}_T\,\uplus\,(S\cup V).\mathbf{P}_T\,\uplus\,(S\cup V).\mathbf{P}_\emptyset & \text{if }|T|>1\text{ and }|V|=1\text{ (Case 2)}\\
   \biguplus_{\widetilde{V}\in\mathbf{P}_V}(S\cup\widetilde{V}).\mathbf{P}_T\,\uplus\, S.\mathbf{P}_T\uplus S.\mathbf{P}_V & \text{if }|T|>1\text{ and }|V|>1\text{ (Case 3)}
\end{cases}
\]
\end{proposition}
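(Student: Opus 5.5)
The plan is to prove all three identities semantically: evaluate both sides with the semantics of \autoref{fig:language} and compare the resulting Boolean polynomials over $\mathbb{F}_2$. Since $\meval{\cdot}$ turns $\uplus$ into addition, it is enough to show that the multiset of monomials generated on the right-hand side, after cancelling pairs, equals the family $S.\mathbf{P}_U$. The tool throughout is a factorization. Because $S$, $T$, $V$ are pairwise disjoint,
\[
\meval{S.\mathbf{P}_U} = m_S\Big(\prod_{x\in U}(1+x) + 1\Big) = m_S\big(A_TA_V + 1\big),
\]
where $A_W:=\prod_{x\in W}(1+x) = 1+\sum_{R\in\mathbf{P}_W} m_R$ for nonempty $W$, so that $\meval{W'.\mathbf{P}_W}=m_{W'}(A_W+1)$ whenever $W'\cap W=\emptyset$ and $W\neq\emptyset$.

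\textbf{Case 3} ($|T|,|V|>1$). This case carries the main idea. Write $A_T=1+a$ and $A_V=1+b$, where $a=\sum_{R\in\mathbf{P}_T}m_R$ and $b=\sum_{\widetilde V\in\mathbf{P}_V}m_{\widetilde V}$. Then $A_TA_V+1=a+b+ab$. Multiplying by $m_S$ gives three pieces:
\begin{itemize}
\item $m_Sa=\meval{S.\mathbf{P}_T}$;
\item $m_Sb=\meval{S.\mathbf{P}_V}$;
\item $m_Sab=\sum_{\widetilde V\in\mathbf{P}_V} m_{S\cup\widetilde V}\,a=\sum_{\widetilde V}\meval{(S\cup\widetilde V).\mathbf{P}_T}$.
\end{itemize}
For the third piece, disjointness of $\widetilde V\subseteq V$ from $T$ is what makes each summand a well-formed power term.

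\textbf{Case 2} ($|V|=1$, $V=\{v\}$). Here $b=v$, and the same computation gives $m_Sa+m_Sv+m_Sva$. These are $S.\mathbf{P}_T$, $(S\cup V).\mathbf{P}_\emptyset$ and $(S\cup V).\mathbf{P}_T$ respectively. The point is that the degree-one family $S.\mathbf{P}_V$ must be written as the monomial $(S\cup V).\mathbf{P}_\emptyset$, because Condition~2 of \autoref{power_term} forbids $|U|=1$. This case is Case~3 with the sum over $\mathbf{P}_V$ collapsed to a single term.

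\textbf{Case 1} ($|T|=|V|=1$). Here $a=t$ and $b=v$, so $m_S(t+v+tv)$ gives the three monomials $(S\cup T)$, $(S\cup V)$ and $(S\cup T\cup V)$ as degree-0 terms.

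\textbf{Well-formedness and the main obstacle.} Finally I will check that every atomic term on the right is well formed. For Condition~1, $S\cup\widetilde V$ is disjoint from $T$. Condition~2 holds because $|T|,|V|>1$ or the degree is $0$. Condition~3 holds because each term either contains a nonempty set or has $U\neq\emptyset$. The only real obstacle is this bookkeeping: keeping the set-level and polynomial-level views aligned and making sure the degree-one families are re-expressed as monomials. The algebra itself is the one-line identity $A_TA_V+1=a+b+ab$. Since in $\mathbb F_2$ no cancellations occur (the three families are disjoint as sets of monomials), the identity also holds at the level of sets, which confirms the size counts.
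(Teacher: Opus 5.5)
Your proof is correct. It differs from the paper's in the level at which it works, not in the underlying decomposition.

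The paper argues on families of sets. It unfolds $S.\mathbf{P}_U=\{S\cup W\mid \emptyset\neq W\subseteq U\}$ and writes each $W$ uniquely as $A\cup B$ with $A\subseteq T$ and $B\subseteq V$. It then partitions the family according to which of $A,B$ is nonempty. Because the pieces are pairwise disjoint, ordinary union coincides with $\uplus$. Each case is enumerated separately: Case~1 lists three sets, and Case~2 isolates $W=\{v\}$.

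You work in the Boolean polynomial ring instead. Your factorization $m_S((1+a)(1+b)+1)=m_S(a+b+ab)$ encodes the same three-way split, but it handles all cases uniformly. Cases~1 and~2 are just the specializations $a=t$ and/or $b=v$. The only case-dependent work is rewriting degree-one families as degree-$0$ terms to satisfy Condition~2 of \autoref{power_term}. Your route also shows that Property~3 is the identity $F_UF_V=F_{U\cup V}+F_U+F_V$ rearranged and multiplied by $m_S$, taken with $(U,V)=(T,V)$. The paper proves that identity separately, in the core case of \autoref{maintheorem}.

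In exchange, the paper's route directly yields the stronger set-level statement: the right-hand side is a genuine partition of the monomial family, with no cancellation. Your argument gives semantic equivalence directly and recovers the set-level version only through your closing remark. That remark is true but loosely worded. Nothing about $\mathbb{F}_2$ prevents cancellation; the point is that no monomial occurs twice. In Case~3 this also requires noting that the families $(S\cup\widetilde V).\mathbf{P}_T$ for distinct $\widetilde V\in\mathbf{P}_V$ are pairwise disjoint, which holds because the $V$-part of a monomial determines $\widetilde V$.
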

\begin{proof}
We prove the three identities by unfolding the semantics of power terms as families of sets. Since \(U=T\cup V\), \(T\cap V=\emptyset\), and \(S\cap U=\emptyset\), it follows that 
\(S\cap T=\emptyset\) and \(S\cap V=\emptyset\). Hence, all power terms occurring below are well formed.

\noindent By Definition~\ref{power_term}, $S.\mathbf{P}_U=\{S\cup W \mid W\in \mathbf{P}_U\}$. Equivalently,
\[
S.\mathbf{P}_U=\{S\cup W \mid \emptyset\neq W\subseteq U\}. \hfill(\star)
\]
We now distinguish the three cases.
\medskip
\begin{itemize}
    \item \textbf{Case 1:} \(|T|=1\) and \(|V|=1\).  $\mathbf{P}_U=\bigl\{T,\,V,\,T\cup V\bigr\}$. Hence,
\[
S.\mathbf{P}_U
=
\{S\cup T,\,S\cup V,\,S\cup T\cup V\}
\]
On the other hand,
\[
(S\cup T).\mathbf{P}_\emptyset=\{S\cup T\},\qquad
(S\cup V).\mathbf{P}_\emptyset=\{S\cup V\},\qquad
(S\cup T\cup V).\mathbf{P}_\emptyset=\{S\cup T\cup V\}.
\]
Since the three families are pairwise disjoint, ordinary union
coincides with exclusive disjunction. Therefore, we can write $S.\mathbf{P}_U
=
(S\cup T).\mathbf{P}_\emptyset
\uplus
(S\cup T\cup V).\mathbf{P}_\emptyset
\uplus
(S\cup V).\mathbf{P}_\emptyset$.
\medskip
\item \textbf{Case 2:} \(|T|>1\) and \(|V|=1\). Let \(V=\{v\}\). Then every non-empty subset \(W\subseteq U=T\cup\{v\}\) belongs to exactly one of the following three disjoint cases:
\smallskip
\begin{enumerate}
    \item \(W\subseteq T\) and \(W\neq\emptyset\);
    \item \(W=\{v\}\);
    \item \(W=A\cup\{v\}\) for some non-empty \(A\subseteq T\).
\end{enumerate}
\smallskip
Therefore, by \((\star)\), the family \(S.\mathbf{P}_U\) decomposes as
\[
S.\mathbf{P}_U
=
\{S\cup A \mid \emptyset\neq A\subseteq T\}
\uplus
\{S\cup\{v\}\}
\uplus
\{S\cup A\cup\{v\}\mid \emptyset\neq A\subseteq T\}.
\]
Now, by noting that
\begin{align}
    &S.\mathbf{P}_T=\{S\cup A \mid \emptyset\neq A\subseteq T\}\nonumber\\
    &(S\cup V).\mathbf{P}_\emptyset=\{S\cup V\}=\{S\cup\{v\}\} \nonumber\\
    &(S\cup V).\mathbf{P}_T=\{S\cup V\cup A \mid \emptyset\neq A\subseteq T\}=\{S\cup A\cup\{v\}\mid \emptyset\neq A\subseteq T\} \nonumber
\end{align}
We obtain
$S.\mathbf{P}_U
=
S.\mathbf{P}_T
\uplus
(S\cup V).\mathbf{P}_T
\uplus
(S\cup V).\mathbf{P}_\emptyset.$
\medskip
\item \textbf{Case 3: \(|T|>1\) and \(|V|>1\).}
Let \(W\subseteq U=T\cup V\) be non-empty. Since \(T\cap V=\emptyset\), \(W\) can be written uniquely as $W=A\cup B,
\qquad A\subseteq T,\quad B\subseteq V$. Because \(W\neq\emptyset\), exactly one of the following occurs:
\smallskip
\begin{enumerate}
    \item \(A\neq\emptyset\) and \(B=\emptyset\);
    \item \(A=\emptyset\) and \(B\neq\emptyset\);
    \item \(A\neq\emptyset\) and \(B\neq\emptyset\).
\end{enumerate}
\smallskip
Thus, by \((\star)\), the family \(S.\mathbf{P}_U\) decomposes as
\[
S.\mathbf{P}_U
=
\{S\cup A \mid \emptyset\neq A\subseteq T\}
\uplus
\{S\cup B \mid \emptyset\neq B\subseteq V\}
\uplus
\{S\cup A\cup B \mid \emptyset\neq A\subseteq T,\ \emptyset\neq B\subseteq V\}.
\]
The first two components are exactly $S.\mathbf{P}_T$ and $S.\mathbf{P}_V$. For the third one, fix \(\widetilde{V}\in\mathbf{P}_V\). Then
\[
(S\cup \widetilde{V}).\mathbf{P}_T
=
\{S\cup \widetilde{V}\cup A \mid \emptyset\neq A\subseteq T\}.
\]
Taking the exclusive disjunction over all non-empty \(\widetilde{V}\subseteq V\) yields
\[
\biguplus_{\widetilde{V}\in\mathbf{P}_V}(S\cup\widetilde{V}).\mathbf{P}_T
=
\{S\cup A\cup B \mid \emptyset\neq A\subseteq T,\ \emptyset\neq B\subseteq V\}.
\]
Therefore, $S.\mathbf{P}_U=\biguplus_{\widetilde{V}\in\mathbf{P}_V}(S\cup\widetilde{V}).\mathbf{P}_T\uplus S.\mathbf{P}_T\uplus S.\mathbf{P}_V$.
\end{itemize}
This proves all three cases.
\end{proof}

\section{Proof of \autoref{maintheorem}}\label{app_theorem}
\begin{sidewaystable}
\centering
\resizebox{\textheight}{!}{
\begin{tabular}{llll}
\textbf{N°} & \textbf{Conditions on \(S,U,V,T\)} & \textbf{\(S.\mathbf{P}_U\,\odot\,T.\mathbf{P}_V\)} & \(\mathrm{Size}\)\\
\hline
1 & \((S=U=\emptyset)\,\lor\,(T=V=\emptyset)\) & \(S_\emptyset.\mathbf{P}_\emptyset\) & 1\\
2 & \(S=S_I\,\land\,U=\emptyset\) & \(T.\mathbf{P}_V\) & 1\\
3 & \(T=S_I\,\land\,V=\emptyset\) & \(S.\mathbf{P}_U\) & 1\\
\hline
4 & \(U=V=\emptyset\) & \((S\cup T).\mathbf{P}_\emptyset\) & 1\\
5 & \(U=\emptyset\,\land\,S\cap V=\emptyset\) & \((S\cup T).\mathbf{P}_V\) & 1\\
6 & \(V=\emptyset\,\land\,T\cap U=\emptyset\) & \((S\cup T).\mathbf{P}_U\) & 1\\
7 & \(U=\emptyset\,\land\,S\cap V\neq\emptyset\) & \((S\cup T).\mathbf{P}_\emptyset\) & 1\\
8 & \(V=\emptyset\,\land\,T\cap U\neq\emptyset\) & \((S\cup T).\mathbf{P}_\emptyset\) & 1\\
\hline
9 & \(U\cap V=\emptyset\,\land\,S\cap V=\emptyset\,\land T\cap U=\emptyset\) & \((S\cup T).\mathbf{P}_{U\cup V}\uplus(S\cup T).\mathbf{P}_U\uplus(S\cup T).\mathbf{P}_V\) & 3\\
10 & \(U\cap V=\emptyset\,\land\,S\cap V\neq\emptyset\,\land T\cap U=\emptyset\) & \((S\cup T).\mathbf{P}_U\) & 1\\
11 & \(U\cap V=\emptyset\,\land\,S\cap V=\emptyset\,\land T\cap U\neq\emptyset\) & \((S\cup T).\mathbf{P}_V\) & 1\\
12 & \(U\cap V=\emptyset\,\land\,S\cap V\neq\emptyset\,\land T\cap U\neq\emptyset\) & \((S\cup T).\mathbf{P}_\emptyset\) & 1\\
13 & \(U\cap V\neq\emptyset\,\land\,S\cap V\neq\emptyset\,\land T\cap U=\emptyset\) & \((S\cup T).\mathbf{P}_U\) & 1\\
14 & \(U\cap V\neq\emptyset\,\land\,S\cap V=\emptyset\,\land T\cap U\neq\emptyset\) & \((S\cup T).\mathbf{P}_V\) & 1\\
15 & \(U\cap V\neq\emptyset\,\land\,S\cap V\neq\emptyset\,\land T\cap U\neq\emptyset\) & \((S\cup T).\mathbf{P}_\emptyset\) & 1\\
\hline
16 & \(|U\cap V|=1\,\land\,|U\setminus V|>1\,\land\,|V\setminus U|>1\,\land\,|S\cap V|=0\,\land\,|T\cap U|=0\) & \((S\cup T).\mathbf{P}_{U\cup V}\uplus(S\cup T).\mathbf{P}_U\uplus(S\cup T).\mathbf{P}_V\) & 3\\
17 & \(|U\cap V|=1\,\land\,|U\setminus V|>1\,\land\,|V\setminus U|=1\,\land\,|S\cap V|=0\,\land\,|T\cap U|=0\) & \((S\cup T\cup(V\setminus U)).\mathbf{P}_{U\setminus V}\uplus(S\cup T\cup V).\mathbf{P}_{U\setminus V}\uplus(S\cup T\cup (U\cap V)).\mathbf{P}_\emptyset\) & 3\\
18 & \(|U\cap V|=1\,\land\,|U\setminus V|=1\,\land\,|V\setminus U|>1\,\land\,|S\cap V|=0\,\land\,|T\cap U|=0\) & \((S\cup T\cup(U\setminus V)).\mathbf{P}_{V\setminus U}\uplus(S\cup T\cup U).\mathbf{P}_{V\setminus U}\uplus(S\cup T\cup (U\cap V)).\mathbf{P}_\emptyset\) & 3\\
19 & \(|U\cap V|=1\,\land\,|U\setminus V|=1\,\land\,|V\setminus U|=1\,\land\,|S\cap V|=0\,\land\,|T\cap U|=0\) & \((S\cup T\cup(U\setminus V)\cup(V\setminus U)).\mathbf{P}_\emptyset\uplus(S\cup T\cup U\cup V).\mathbf{P}_\emptyset\uplus(S\cup T\cup (U\cap V)).\mathbf{P}_\emptyset\) & 3\\
\hline
20 & \(|U\cap V|>1\,\land\,(|U\setminus V|=0\,\lor\,|V\setminus U|=0)\) & \((S\cup T).\mathbf{P}_{U\cap V}\) & 1\\
21 & \(|U\cap V|>1\,\land\,|U\setminus V|>1\,\land\,|V\setminus U|>1\,\land\,|S\cap V|=0\,\land\,|T\cap U|=0\) & \((S\cup T).\mathbf{P}_{U\cup V}\uplus(S\cup T).\mathbf{P}_U\uplus(S\cup T).\mathbf{P}_V\) & 3\\
22 & \(|U\cap V|>1\,\land\,|U\setminus V|>1\,\land\,|V\setminus U|=1\,\land\,|S\cap V|=0\,\land\,|T\cap U|=0\) & \((S\cup T\cup(V\setminus U)).\mathbf{P}_{U}\uplus(S\cup T\cup (V\setminus U)).\mathbf{P}_{U\cap V}\uplus(S\cup T).\mathbf{P}_{U\cap V}\) & 3\\
23 & \(|U\cap V|>1\,\land\,|U\setminus V|=1\,\land\,|V\setminus U|>1\,\land\,|S\cap V|=0\,\land\,|T\cap U|=0\) & \((S\cup T\cup(U\setminus V)).\mathbf{P}_{V}\uplus(S\cup T\cup (U\setminus V)).\mathbf{P}_{U\cap V}\uplus(S\cup T).\mathbf{P}_{U\cap V}\) & 3\\
24 & \(|U\cap V|>1\,\land\,|U\setminus V|=1\,\land\,|V\setminus U|=1\,\land\,|S\cap V|=0\,\land\,|T\cap U|=0\) & \((S\cup T\cup(U\setminus V)\cup(V\setminus U)).\mathbf{P}_{U\cap V}\uplus(S\cup T).\mathbf{P}_{U\cap V}\uplus(S\cup T\cup(U\setminus V)\cup(V\setminus U)).\mathbf{P}_\emptyset\) & 3\\
\end{tabular}}
\caption{(Same as \autoref{ex:multip}) Multiplication rules for any two power terms. Conditions are checked according to the ordering given by the corresponding number. The rules always produce power term polynomials of size no greater than 3.}
\label{ex:multip2}
\end{sidewaystable}
\begin{theorem}[Multiplication Rewriting Principle (Restated)]
Let $e,f$ be atomic expressions of the abstraction language,
that is, power terms $S.\mathbf{P}_U,T.\mathbf{P}_V\in\mathcal{T}$
or the constants $S_\emptyset.\mathbf{P}_\emptyset$ and
$S_I.\mathbf{P}_\emptyset$. Then, the product $e\odot f$ is semantically equivalent to a power term polynomial containing at most three atomic terms. Moreover, this polynomial can be obtained by a case analysis on the sets $S,U,T,V$, as summarized in \autoref{ex:multip2}.
\end{theorem}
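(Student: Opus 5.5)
The plan is to avoid enumerating monomial pairs and instead give every power term a closed algebraic form in the Boolean polynomial ring $\mathbb{F}_2[x_1,\ldots,x_n]/\{x_i^2=x_i\}$. For $W\subseteq\mathcal{D}$ write $m_W:=\prod_{x\in W}x$ and $N_W:=\prod_{x\in W}(1+x)$. Expanding $N_U$ and comparing with Definition~\ref{power_set} and the proof of \autoref{prop:canonical-disjunction} (Case 1) gives
\[
\meval{S.\mathbf{P}_U}=m_S\,(1+N_U)\quad\text{for }U\neq\emptyset,\qquad \meval{S.\mathbf{P}_\emptyset}=m_S .
\]
Two idempotence facts do most of the work: $N_UN_V=N_{U\cup V}$, since $(1+x)^2=1+x$, and $m_W N_X=0$ whenever $W\cap X\neq\emptyset$, since $x(1+x)=0$. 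Since $\meval{e\odot f}=\meval{e}\cdot\meval{f}$, it suffices to compute this product and re-read the result as a power term polynomial.

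\textbf{Step 1 (rows 1--8).} When at least one factor is a constant or a degree-$0$ term, the computation is immediate. Rows 1--3 are the multiplicative zero and identity axioms of Definition~\ref{def:pt-multiplication}. For $U=\emptyset$ the product is $m_{S\cup T}$ if $V=\emptyset$, and $m_{S\cup T}(1+N_V)$ otherwise. This already equals $\meval{(S\cup T).\mathbf{P}_V}$ when $S\cap V=\emptyset$ (row 5). When $S\cap V\neq\emptyset$ the term $m_{S\cup T}N_V$ vanishes and only $m_{S\cup T}$ remains (row 7). Rows 6 and 8 follow by symmetry.

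\textbf{Step 2 (general identity for $U,V\neq\emptyset$).} Multiply out:
\[
m_{S\cup T}(1+N_U)(1+N_V)=m_{S\cup T}\bigl(1+N_U+N_V+N_{U\cup V}\bigr).
\]
Write $1=1+1+1$ over $\mathbb{F}_2$ and regroup this as $m_{S\cup T}(1+N_{U\cup V})+m_{S\cup T}(1+N_U)+m_{S\cup T}(1+N_V)$. This is $\meval{(S\cup T).\mathbf{P}_{U\cup V}\uplus(S\cup T).\mathbf{P}_U\uplus(S\cup T).\mathbf{P}_V}$, which is rows 9, 16 and 21, and size $3$ is immediate.

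\textbf{Step 3 (killing terms).} The remaining rows come from the annihilation rule $m_WN_X=0$. If $S\cap V\neq\emptyset$, both $N_V$ and $N_{U\cup V}$ are killed, leaving $m_{S\cup T}(1+N_U)$ (rows 10, 13); if in addition $T\cap U\neq\emptyset$, only $m_{S\cup T}$ survives (rows 12, 15). The cases $T\cap U\neq\emptyset$ with $S\cap V=\emptyset$ (rows 11, 14) are symmetric. For row 20, say $V\subseteq U$: then $U\cup V=U$, the sum collapses to $1+N_V$, and $V=U\cap V$.

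At each of these steps I also have to check well-formedness of the output terms against Definition~\ref{power_term}. Whenever a surviving term $(S\cup T).\mathbf{P}_X$ has $X$ meeting $S\cup T$, I would first rewrite $m_{S\cup T}(1+N_X)$ as $m_{S\cup T}(1+N_{X\setminus(S\cup T)})$, which is valid because the factors $(1+x)$ with $x\in S\cup T$ act as $0$ on $m_{S\cup T}$ only when paired correctly. I expect this bookkeeping to be delicate.

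\textbf{Main obstacle (rows 17--19 and 22--24).} Here some index set among $U\setminus V$, $V\setminus U$, $U\cap V$ is a singleton. The naive Step~2 output can then violate Condition~2 of Definition~\ref{power_term}, or simply differ in shape from the tabulated one. My first attempt would be to take the Step~2 identity and apply \autoref{shorten_expand} to split $\mathbf{P}_{U\cup V}$, $\mathbf{P}_U$ and $\mathbf{P}_V$ along the partition $U\setminus V,\ U\cap V,\ V\setminus U$ (Case~2 when one part is a singleton, Case~1 when two are). I would then cancel duplicated terms with the additive-inverse axiom $\pi\uplus\pi=S_\emptyset.\mathbf{P}_\emptyset$ and check that exactly three well-formed atoms remain.

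If that regrouping becomes messy, the fallback is a direct algebraic verification for each row. Set $a=U\setminus V$, $c=U\cap V$, $b=V\setminus U$, and use $N_U=N_aN_c$, $N_V=N_bN_c$, together with $N_{\{y\}}=1+y$ for singletons. Then expand both sides as polynomials in $N_a$, $N_b$, $N_c$ and the singleton variables, and compare. This is finite and mechanical, but it is where the ordering of the case conditions must be respected so that no row is invoked outside its hypotheses.
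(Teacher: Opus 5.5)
Your route is the paper's own. It uses the closed form $\meval{S.\mathbf{P}_U}=m_S(1+N_U)$ (the paper writes $m_SF_U$ with $1+F_U=N_U$). It derives the core identity $F_UF_V=F_{U\cup V}+F_U+F_V$ from $N_UN_V=N_{U\cup V}$. The collapse $m_SN_V=0$ for $S\cap V\neq\emptyset$ handles rows 7--8 and 10--15, and \autoref{shorten_expand} with $\uplus$-cancellation handles rows 17--19 and 22--24. The plan goes through, but one remark in it is false and should be replaced.

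\textbf{The false rewrite.} The identity $m_{S\cup T}(1+N_X)=m_{S\cup T}(1+N_{X\setminus(S\cup T)})$ does not hold. By your own annihilation rule, $X\cap(S\cup T)\neq\emptyset$ forces $m_{S\cup T}N_X=0$, so $m_{S\cup T}(1+N_X)=m_{S\cup T}$. For instance, with $S\cup T=\{x_1\}$ and $X=\{x_1,x_2,x_3\}$ your right-hand side is $x_1(x_2+x_3+x_2x_3)\neq x_1$.

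\textbf{Why it is not needed.} Condition~1 holds automatically for every output term:
the inputs give $S\cap U=T\cap V=\emptyset$;
the rows, read in the prescribed order, supply $T\cap U=\emptyset$ or $S\cap V=\emptyset$ whenever a power part inside $U$ or $V$ survives;
and in rows 16--24 bases and power parts are assembled from the disjoint pieces $U\setminus V$, $U\cap V$, $V\setminus U$.
Condition~2 cannot fail either, since in the core case the three power parts $U\cup V$, $U$, $V$ all have at least two elements. Hence the size bound is already established after your Steps~2--3. \autoref{shorten_expand}, or your fallback verification (which succeeds because every tabulated row is correct), is needed only to reach the particular shapes of rows 17--19 and 22--24.

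\textbf{How to regroup.} Split off one singleton at a time, since Case~2 requires $|V|=1$. For row 17:
write $U\cup V=U\cup(V\setminus U)$ and cancel $(S\cup T).\mathbf{P}_U$;
expand $(S\cup T\cup(V\setminus U)).\mathbf{P}_U$ along $U\cap V$ by Case~2;
expand $(S\cup T).\mathbf{P}_V$ by Case~1;
two pairs of degree-$0$ terms then cancel.
The paper's Step~4 instead applies Case~2 with a two-element $V'$ (rows 17 and 24) and Case~1 to a three-element power part (row 19). Both uses fall outside those cases' hypotheses. Your one-singleton-at-a-time bookkeeping, or your direct algebraic check, is therefore the sounder way to justify those rows.
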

\begin{proof}
For a finite set $W\subseteq\mathcal D$, write
\[
m_W:=\prod_{x\in W}x
\qquad\text{and}\qquad
F_W:=\sum_{\emptyset\neq A\subseteq W} m_A.
\]
Thus, for every well-formed power term $S.\mathbf{P}_U$,
\[
\meval{S.\mathbf{P}_U}=m_S\,F_U,
\]
while the two constants satisfy
\[
\meval{S_\emptyset.\mathbf{P}_\emptyset}=0,
\qquad
\meval{S_I.\mathbf{P}_\emptyset}=1.
\]

We prove the theorem by a case analysis on the atomic factors
$e$ and $f$.

\medskip
\noindent\textbf{Step 1: Constant cases (Rules 1--3).}
If one factor is $S_\emptyset.\mathbf{P}_\emptyset$, then the product is
$0$, hence semantically equal to $S_\emptyset.\mathbf{P}_\emptyset$.
If one factor is $S_I.\mathbf{P}_\emptyset$, then the product is the other
factor. This gives Rules~1--3.

\medskip
\noindent\textbf{Step 2: One or both factors are degree-$0$ power terms (Rules 4--8).}
Assume first that $U=V=\emptyset$. Then
\[
\meval{S.\mathbf{P}_\emptyset\odot T.\mathbf{P}_\emptyset}
= m_Sm_T
= m_{S\cup T},
\]
so
\[
S.\mathbf{P}_\emptyset\odot T.\mathbf{P}_\emptyset
\equiv (S\cup T).\mathbf{P}_\emptyset,
\]
which is Rule~4.

Now assume $U=\emptyset$ and $V\neq\emptyset$. Then
\[
\meval{S.\mathbf{P}_\emptyset\odot T.\mathbf{P}_V}
= m_Sm_TF_V
= m_{S\cup T}F_V,
\]
provided $S\cap V=\emptyset$, which yields
\[
S.\mathbf{P}_\emptyset\odot T.\mathbf{P}_V
\equiv (S\cup T).\mathbf{P}_V.
\]
This is Rule~5.

If instead $U=\emptyset$ and $S\cap V\neq\emptyset$, let $x\in S\cap V$.
Since $x$ divides $m_S$, we have in the Boolean polynomial ring
\[
m_S\prod_{v\in V}(1+v)=0,
\]
because the factor $x(1+x)=x+x^2=0$. Using
\[
F_V=\prod_{v\in V}(1+v)+1,
\]
it follows that
\[
m_SF_V = m_S.
\]
Hence
\[
\meval{S.\mathbf{P}_\emptyset\odot T.\mathbf{P}_V}
= m_T(m_SF_V)
= m_{S\cup T},
\]
so
\[
S.\mathbf{P}_\emptyset\odot T.\mathbf{P}_V
\equiv (S\cup T).\mathbf{P}_\emptyset,
\]
which is Rule~7. Rule~6 and Rule~8 are symmetric.

\medskip
\noindent\textbf{Step 3: Both factors have non-empty power parts (Rules 10--15).}
From now on assume that $U\neq\emptyset$ and $V\neq\emptyset$.

\smallskip
\noindent\emph{Reduction when the base of one factor meets the power part of the other.}
If $S\cap V\neq\emptyset$, the same argument as above gives
$m_SF_V=m_S$, hence
\[
\meval{S.\mathbf{P}_U\odot T.\mathbf{P}_V}
= m_T(m_SF_V)F_U
= m_{S\cup T}F_U
= \meval{(S\cup T).\mathbf{P}_U}.
\]
Therefore
\[
S.\mathbf{P}_U\odot T.\mathbf{P}_V
\equiv (S\cup T).\mathbf{P}_U
\qquad\text{if }S\cap V\neq\emptyset\text{ and }T\cap U=\emptyset,
\]
which yields Rules~10 and~13 according as $U\cap V=\emptyset$ or
$U\cap V\neq\emptyset$.

Similarly, if $S\cap V=\emptyset$ and $T\cap U\neq\emptyset$, then
\[
\meval{S.\mathbf{P}_U\odot T.\mathbf{P}_V}
= \meval{(S\cup T).\mathbf{P}_V},
\]
hence
\[
S.\mathbf{P}_U\odot T.\mathbf{P}_V
\equiv (S\cup T).\mathbf{P}_V,
\]
which gives Rules~11 and~14.

Finally, if both $S\cap V\neq\emptyset$ and $T\cap U\neq\emptyset$, then
both collapses occur and
\[
\meval{S.\mathbf{P}_U\odot T.\mathbf{P}_V}=m_{S\cup T},
\]
hence
\[
S.\mathbf{P}_U\odot T.\mathbf{P}_V
\equiv (S\cup T).\mathbf{P}_\emptyset.
\]
This gives Rules~12 and~15.

\smallskip
\noindent\emph{Core case: }$S\cap V=\emptyset$ and $T\cap U=\emptyset$.
In this case no collapse occurs, so
\[
\meval{S.\mathbf{P}_U\odot T.\mathbf{P}_V}
= m_{S\cup T}\,F_UF_V.
\]
We now use the identity
\begin{equation}\label{eq:FU-FV}
F_UF_V = F_{U\cup V}+F_U+F_V,
\end{equation}
where $+$ is addition in the Boolean polynomial ring over
$\mathbb F_2$.
Indeed,
\[
1+F_W=\prod_{w\in W}(1+w)
\]
for every finite $W\subseteq\mathcal D$, and therefore
\[
(1+F_U)(1+F_V)
=\prod_{u\in U}(1+u)\prod_{v\in V}(1+v)
=\prod_{z\in U\cup V}(1+z)
=1+F_{U\cup V},
\]
because $(1+x)^2=1+x$ in the Boolean polynomial ring. Expanding the
left-hand side over $\mathbb F_2$ gives \eqref{eq:FU-FV}.

Multiplying \eqref{eq:FU-FV} by $m_{S\cup T}$ yields
\begin{equation}\label{eq:raw-product}
\meval{S.\mathbf{P}_U\odot T.\mathbf{P}_V}
=
\meval{(S\cup T).\mathbf{P}_{U\cup V}}
+
\meval{(S\cup T).\mathbf{P}_U}
+
\meval{(S\cup T).\mathbf{P}_V}.
\end{equation}
Thus the product is always semantically equivalent to a power term
polynomial with at most three atomic terms. It remains to rewrite
\eqref{eq:raw-product} into well-formed syntax according to the possible
shapes of $U$ and $V$.

\medskip
\noindent\textbf{Step 4: Syntactic normalization of the core case (Rules 9, 16--24).}

Starting from \eqref{eq:raw-product}, set
\[
A:=U\setminus V,\qquad B:=V\setminus U,\qquad C:=U\cap V.
\]
Then \(U=A\cup C\), \(V=B\cup C\), and \(A,B,C\) are pairwise disjoint.
Moreover, in the present core case we have \(S\cap V=\emptyset\) and
\(T\cap U=\emptyset\), hence also
\[
S\cap(A\cup B\cup C)=\emptyset,\qquad
T\cap(A\cup B\cup C)=\emptyset.
\]
Thus every power term appearing below has disjoint base and power part.

We distinguish the possible cardinalities of \(A,B,C\).

\smallskip
\noindent\emph{(i) Disjoint case: \(C=\emptyset\).}
Then \(U=A\), \(V=B\), and \eqref{eq:raw-product} becomes
\[
(S\cup T).\mathbf{P}_{A\cup B}
\uplus
(S\cup T).\mathbf{P}_A
\uplus
(S\cup T).\mathbf{P}_B.
\]
If \(|A|>1\) and \(|B|>1\), all three terms are well formed, giving
Rule~9. The cases \(|A|=1\) or \(|B|=1\) do not occur here, because
\(U\) and \(V\) are themselves power parts of well-formed power terms and
therefore cannot have size \(1\).

\smallskip
\noindent\emph{(ii) Inclusion case: \(C\neq\emptyset\) and \(A=\emptyset\) or \(B=\emptyset\).}
If \(A=\emptyset\), then \(U=C\subseteq V\), so from \eqref{eq:raw-product}
\[
F_{U\cup V}+F_U+F_V = F_V+F_U+F_V = F_U.
\]
Hence
\[
\meval{S.\mathbf{P}_U\odot T.\mathbf{P}_V}
=\meval{(S\cup T).\mathbf{P}_U},
\]
which is Rule~20. The case \(B=\emptyset\) is symmetric.

\smallskip
\noindent\emph{(iii) Proper overlap case: \(C\neq\emptyset\), \(A\neq\emptyset\), \(B\neq\emptyset\).}
Now \eqref{eq:raw-product} reads
\begin{equation}\label{eq:step4-overlap}
(S\cup T).\mathbf{P}_{A\cup B\cup C}
\uplus
(S\cup T).\mathbf{P}_{A\cup C}
\uplus
(S\cup T).\mathbf{P}_{B\cup C}.
\end{equation}
This always gives the correct semantics; the only issue is that some
power parts in \eqref{eq:step4-overlap} may have size \(1\), which is not
well formed. We eliminate such singleton power parts using
Proposition~\ref{shorten_expand}.

\smallskip
\noindent\textbf{Case \(|C|=1\).}
Write \(C=\{c\}\).

If \(|A|>1\) and \(|B|>1\), then all three power parts
\(A\cup B\cup C\), \(A\cup C\), \(B\cup C\) have size \(>1\), so
\eqref{eq:step4-overlap} is already well formed. This is Rule~16.

If \(|A|>1\) and \(|B|=1\), then \(B\cup C\) has size \(2\), while
\(A\cup C\) and \(A\cup B\cup C\) have size \(>1\). Applying
Proposition~\ref{shorten_expand} (Case~2 there) to the first term of
\eqref{eq:step4-overlap}, with base \(S\cup T\), \(T'=A\), \(V'=B\cup C\),
gives
\[
(S\cup T).\mathbf{P}_{A\cup B\cup C}
=
(S\cup T).\mathbf{P}_A
\uplus
(S\cup T\cup(B\cup C)).\mathbf{P}_A
\uplus
(S\cup T\cup(B\cup C)).\mathbf{P}_\emptyset.
\]
Since also
\[
(S\cup T).\mathbf{P}_{A\cup C}
=
(S\cup T).\mathbf{P}_A
\uplus
(S\cup T\cup C).\mathbf{P}_A
\uplus
(S\cup T\cup C).\mathbf{P}_\emptyset
\]
by the same proposition, the two copies of \((S\cup T).\mathbf{P}_A\)
cancel under \(\uplus\). Because \(B\cup C=V\) and \(C=U\cap V\), this yields
\[
(S\cup T\cup(V\setminus U)).\mathbf{P}_{U\setminus V}
\uplus
(S\cup T\cup V).\mathbf{P}_{U\setminus V}
\uplus
(S\cup T\cup(U\cap V)).\mathbf{P}_\emptyset,
\]
which is Rule~17.

The case \(|A|=1\), \(|B|>1\) is symmetric and gives Rule~18.

If \(|A|=|B|=1\), then each of the three power parts in
\eqref{eq:step4-overlap} has size \(2\) or \(3\), and applying
Proposition~\ref{shorten_expand} (Case~1 there) to
\((S\cup T).\mathbf{P}_{A\cup B\cup C}\) and cancelling the repeated
degree-\(0\) term gives
\[
(S\cup T\cup A\cup B).\mathbf{P}_\emptyset
\uplus
(S\cup T\cup A\cup B\cup C).\mathbf{P}_\emptyset
\uplus
(S\cup T\cup C).\mathbf{P}_\emptyset,
\]
that is,
\[
(S\cup T\cup(U\setminus V)\cup(V\setminus U)).\mathbf{P}_\emptyset
\uplus
(S\cup T\cup U\cup V).\mathbf{P}_\emptyset
\uplus
(S\cup T\cup(U\cap V)).\mathbf{P}_\emptyset,
\]
which is Rule~19.

\smallskip
\noindent\textbf{Case \(|C|>1\).}

If \(|A|>1\) and \(|B|>1\), then all three terms in
\eqref{eq:step4-overlap} are already well formed, giving Rule~21.

If \(|A|>1\) and \(|B|=1\), apply Proposition~\ref{shorten_expand}
(Case~2 there) to the first term of \eqref{eq:step4-overlap}, now with
base \(S\cup T\), \(T'=U=A\cup C\), \(V'=B\). This gives
\[
(S\cup T).\mathbf{P}_{A\cup B\cup C}
=
(S\cup T).\mathbf{P}_U
\uplus
(S\cup T\cup B).\mathbf{P}_U
\uplus
(S\cup T\cup B).\mathbf{P}_\emptyset.
\]
Since \((S\cup T).\mathbf{P}_U\) cancels with the second term of
\eqref{eq:step4-overlap}, we obtain
\[
(S\cup T\cup(V\setminus U)).\mathbf{P}_U
\uplus
(S\cup T\cup(V\setminus U)).\mathbf{P}_{U\cap V}
\uplus
(S\cup T).\mathbf{P}_{U\cap V},
\]
which is Rule~22.

The case \(|A|=1\), \(|B|>1\) is symmetric and gives Rule~23.

Finally, if \(|A|=|B|=1\), apply Proposition~\ref{shorten_expand}
(Case~2 there) first to the term \((S\cup T).\mathbf{P}_{A\cup B\cup C}\)
with \(T'=C\), \(V'=A\cup B\), and then use Case~1 of the same proposition
on the degree-\(2\) part \(A\cup B\). After cancellation with the two
remaining terms in \eqref{eq:step4-overlap}, one obtains
\[
(S\cup T\cup A\cup B).\mathbf{P}_C
\uplus
(S\cup T).\mathbf{P}_C
\uplus
(S\cup T\cup A\cup B).\mathbf{P}_\emptyset,
\]
that is,
\[
(S\cup T\cup(U\setminus V)\cup(V\setminus U)).\mathbf{P}_{U\cap V}
\uplus
(S\cup T).\mathbf{P}_{U\cap V}
\uplus
(S\cup T\cup(U\setminus V)\cup(V\setminus U)).\mathbf{P}_\emptyset,
\]
which is Rule~24.

\smallskip
This exhausts all possibilities in the core case. Therefore
\eqref{eq:raw-product} can always be rewritten into one of Rules~9,
16--24, and in each case the result is a well-formed power term
polynomial with at most three atomic terms.

Hence, for any two atomic expressions $e$ and $f$, the product
$e\odot f$ is semantically equivalent to a power term polynomial of size
at most $3$, obtained by the case analysis summarized in
\autoref{ex:multip2}.
\end{proof}

\section{Additional Information about the Algebra}\label{app_info}
We discuss three additional properties of the proposed algebra using intuitive examples:
\begin{enumerate}
    \item Comparison of representations, including Power Term Polynomials (PTP) and Decision Diagrams (DD).
    \item Notes about conversions, exponential blowup and PTP$\to$CNF conversion.
    \item Interpretation of PTP multiplication rules.
\end{enumerate}

\subsection*{Part 1: PTP vs. DD}

PTPs and DDs are fundamentally different representations with distinct purposes. DDs belong to the family of negation normal form (NNF). They are obtained by introducing additional properties over NNF, such as decomposability, determinism, and smoothness. These properties support tractable inference. In contrast, PTPs are a generalization of algebraic normal form (ANF). PTPs provide a more succinct representation of ANF by grouping monomials. They are designed to bridge CNF and ANF representations while avoiding exponential blowup. Moreover, PTPs admit a calculus that can be used for simplification, as shown later.

As a clarifying example about their difference, consider expressing a clause, that is, a disjunction of $n$ literals, using PTPs and DDs. According to Proposition~15, PTPs encode the clause with at most $3$ power terms. In contrast, DDs require $n$ decision nodes, highlighting their dependence on the number of variables/literals in the clause.

\subsection*{Part 2: Conversions and Exponential Blowup}

In the paper, we have already described the advantage of conversion CNF$\to$PTP over CNF$\to$ANF. The former produces one PTP for each clause and each PTP is compactly described by $3$ power terms. In contrast, the latter often produces exponentially large polynomials for each clause and introduces additional polynomials to constrain the behaviour of auxiliary variables; see Example~2.

We have also explained the conversion ANF$\to$CNF; see Example~3. This is carried out in four stages:
\begin{enumerate}
    \item The ANF polynomial is first linearised, meaning that an auxiliary variable is introduced for each monomial with degree larger than $1$.
    \item The linearised polynomial is further split into equally sized chunks/tiles by introducing extra auxiliary variables. This is done to keep the final conversion in the third stage tractable.
    \item Each tile is converted to CNF through the ANF$\to$Boolean$\to$CNF conversion.
    \item The resulting CNF from the tiles is conjoined with the CNF obtained from the constraints on the auxiliary variables introduced in Stages~1 and~2.
\end{enumerate}

Importantly, PTP$\to$CNF can be obtained by modifying Stage~1 of ANF$\to$CNF. The key idea is to linearise PTPs at the level of power terms instead of monomials. Specifically, we introduce an auxiliary variable for each power term grouping more than one monomial. This exponentially reduces the number of terms in the linearised polynomial, as well as auxiliary variables and constraints, yielding exponential savings in the number of clauses and variables in the final CNF.

We now provide an example.

\paragraph{Example.}
Suppose we have the following ANF and equivalent PTP, where we use parentheses for better readability:
\[
\mathrm{ANF}
=
(x_1+x_2+x_3+x_4)
+
(x_1x_2+x_1x_3+x_1x_4)
+
(x_1x_2x_3+x_1x_2x_4+x_1x_3x_4)
+
x_1x_2x_3x_4,
\]
and
\[
\mathrm{PTP}
=
S_1.\mathbf{P}_\emptyset
\uplus
S_2.\mathbf{P}_\emptyset
\uplus
S_3.\mathbf{P}_\emptyset
\uplus
S_4.\mathbf{P}_\emptyset
\uplus
S_1.\mathbf{P}_{234}.
\]

Stage~1 of ANF$\to$CNF produces a linearised polynomial with $11$ degree-$1$ monomials and introduces $7$ auxiliary variables with corresponding constraints, for example $x_1' \leftrightarrow x_1x_2$.

In contrast, Stage~1 for PTP produces a linearised polynomial with $5$ terms and a single auxiliary variable. That is,
\[
p = x_1+x_2+x_3+x_4+x_1',
\]
where $p$ is the linearised polynomial and
\[
x_1' \Leftrightarrow x_1 \land (x_2 \lor x_3 \lor x_4).
\]
Note how $S_1.\mathbf{P}_{234}$ can be expressed in terms of conjunction and disjunction operations thanks to Proposition~15. Clearly, $p$ can now be processed through standard ANF$\to$CNF conversion.

Moreover, the constraint
\[
x_1' \Leftrightarrow x_1 \land (x_2 \lor x_3 \lor x_4)
\]
can be easily converted into a CNF using a number of clauses that is at most linear in the number of variables involved in the clause. Indeed,
\[
x_1' \Leftarrow x_1 \land (x_2 \lor x_3 \lor x_4)
\]
is equivalently rewritten using three clauses:
\[
(x_1' \lor \neg x_1 \lor \neg x_2)
\land
(x_1' \lor \neg x_1 \lor \neg x_3)
\land
(x_1' \lor \neg x_1 \lor \neg x_4),
\]
and
\[
x_1' \Rightarrow x_1 \land (x_2 \lor x_3 \lor x_4)
\]
results in two clauses:
\[
(\neg x_1' \lor x_1)
\land
(\neg x_1' \lor x_2 \lor x_3 \lor x_4).
\]

Therefore, CNF and PTP representations can be translated without incurring exponential blowup, unlike ANFs.

\subsection*{Part 3: Interpretation of PTP Multiplication Rules}

We highlight how most multiplication rules in Theorem~25 eliminate groups of monomials arising from combinations of the same variables. These rules can be interpreted as symbolic analogues of subsumption and variable elimination, similarly to known preprocessing strategies for CNFs, such as bounded variable elimination, but operating directly at the level of grouped monomials.

The rules in Figure~3 can be organised into five mutually exclusive cases:
\begin{itemize}
    \item \textbf{Case 1:}
    $
    (U=\emptyset) \lor (V=\emptyset),
    $
    including Rules~1 to~8, single monomial vs. grouped monomials.

    \item \textbf{Case 2:}
    $
    (U \cap V = \emptyset) \land \neg \text{Case 1},
    $
    including Rules~9 to~12, namely multiplications between power terms with some overlaps between variables.

    \item \textbf{Case 3:}
    $(U \cap V \neq \emptyset)
    \land
    \bigl((S \cap V \neq \emptyset) \lor (T \cap U \neq \emptyset)\bigr),$
    including Rules~13 to~15, namely multiplications between power terms with ``exponential'' overlap between variables.

    \item \textbf{Case 4:}
    $
    (U \cap V \neq \emptyset)
    \land
    \bigl((S \cap V = \emptyset) \lor (T \cap U = \emptyset)\bigr)
    \land
    \bigl((U \setminus V = \emptyset) \lor (V \setminus U = \emptyset)\bigr),
    $
    including Rule~20.

    \item \textbf{Case 5:} the remaining case.
\end{itemize}

We now show some examples illustrating how the rules exploit subsumption to eliminate variables, valid for all Cases~1--4 except Rule~9.

\paragraph{Examples.}

\textbf{Rule 7, Case 1.}
Consider
\[
S_1.\mathbf{P}_\emptyset \odot S_\emptyset.\mathbf{P}_{123} = S_1.\mathbf{P}_\emptyset.
\]
By Proposition~15, $S_1.\mathbf{P}_\emptyset \odot S_\emptyset.\mathbf{P}_{123}$ is equivalent to
\[
x_1 \land (x_1 \lor x_2 \lor x_3).
\]
Now, $x_1 \land (x_1 \lor x_2 \lor x_3)$
can be replaced by $x_1$ through subsumption.

\textbf{Rule 10, Case 2.}
Consider
\[
S_1.\mathbf{P}_{23} \odot T_4.\mathbf{P}_{1567} = S_{14}.\mathbf{P}_{23}.
\]
Equivalently, we have
\[
x_1 \land (x_2 \lor x_3)
\land
x_4 \land (x_1 \lor x_5 \lor x_6 \lor x_7).
\]
This can be replaced by
\[
x_1 \land x_4 \land (x_2 \lor x_3).
\]

\textbf{Rule 15, Case 3.}
Consider
\[
S_1.\mathbf{P}_{234} \odot T_2.\mathbf{P}_{135} = S_{12}.\mathbf{P}_\emptyset.
\]
That is,
\[
x_1 \land (x_2 \lor x_3 \lor x_4)
\land
x_2 \land (x_1 \lor x_3 \lor x_5)
\]
is replaced by
\[
x_1 \land x_2.
\]

\textbf{Rule 20, Case 4.}
\[
S_\emptyset.\mathbf{P}_{1234} \odot S_\emptyset.\mathbf{P}_{12} = S_\emptyset.\mathbf{P}_{12}.
\]

Beyond these local effects, the calculus also captures non-trivial interactions between structured clauses that are not reducible by standard CNF preprocessing.

\paragraph{Structured interaction example.}
Consider
\[
(x_1 \lor x_2 \lor x_3)
\land
(x_1 \lor x_2 \lor x_4).
\]
In PTP, this corresponds to
\[
S_\emptyset.\mathbf{P}_{123} \odot S_\emptyset.\mathbf{P}_{124},
\]
which falls under Rule~24 and rewrites to
\[
S_{34}.\mathbf{P}_{12} \uplus S_\emptyset.\mathbf{P}_{12} \uplus S_{34}.\mathbf{P}_\emptyset.
\]
This decomposition captures the interaction between the two clauses through their shared variables $x_1,x_2$, while isolating their differences $x_3$ and $x_4$. Importantly, this transformation is not obtained through local CNF simplifications such as subsumption or unit propagation, and avoids enumerating the full monomial expansion that would arise in direct ANF.

\end{document}